\documentclass[runningheads]{llncs}
\pdfoutput=1 
\usepackage{cite,mathptmx,graphicx,hyperref,amstext,color}

\newif\ifSubmission
\Submissionfalse

\newif\ifProceedings
\Proceedingsfalse

\newif\ifArxiv
\Arxivtrue

\let\oldendproof\endproof
\def\endproof{\qed\oldendproof}

\def\comment#1{}%
\def\withcomments{%
  \usepackage{color}
  \newcounter{mycommentcounter}%
   \def\comment##1{\refstepcounter{mycommentcounter}%
    \ifhmode%
     \unskip%
     {\dimen1=\baselineskip \divide\dimen1 by 2 %
       \raise\dimen1\llap{\tiny
  {-\themycommentcounter-}}}\fi%
     \marginpar[{\renewcommand{\baselinestretch}{0.8}%
       \hspace*{-2em}\begin{minipage}{12em}\footnotesize%
[\themycommentcounter]:%
\raggedright ##1\end{minipage}}]{\renewcommand{\baselinestretch}{0.8}%
       \begin{minipage}{12em}\footnotesize%
[\themycommentcounter]: \raggedright%
##1\end{minipage}}}%
  }
\newcommand{\david}[1]{\comment{\textcolor{green}{\textbf{DE:}} #1}}
\definecolor{orange}{rgb}{1,0.5,0}

\newcommand{\martin}[1]{\comment{\textcolor{red}{\textbf{MN:}} #1}}


\newcommand{\red}{\ensuremath{\text{red}}}
\newcommand{\green}{\ensuremath{\text{green}}}

\title{Optimal 3D Angular Resolution for Low-Degree Graphs}

 \author{David Eppstein\inst{1} \and Maarten L\"offler\inst{1} \and Elena Mumford\inst{2} \and Martin N\"ollenburg\inst{1}}

 \institute{\noindent
 \inst{1}Department of Computer Science, University of California, Irvine, USA\\
 \inst{2}Department of Mathematics and Computer Science, TU Eindhoven, The Netherlands}

\begin{document}
\maketitle

\begin{abstract}
  We show that every graph of maximum degree three can be drawn in
  three dimensions with at most two bends per edge, and with
  $120^\circ$ angles between any two edge segments meeting at a vertex
  or a bend. We show that every graph of maximum degree four can be
  drawn in three dimensions with at most three bends per edge, and
  with $109.5^\circ$ angles, i.\,e., the angular resolution of the
  diamond lattice, between any two edge segments meeting at a vertex
  or bend.
\end{abstract}

\section{Introduction}

Much past research in graph drawing has shown the importance of avoiding sharp angles at vertices, bends, and crossings of a drawing, as they make the edges difficult to follow~\cite{HuaHonEad-PacVis-08}. There has been much interest in finding drawings where the angles at these features are restricted, either by requiring all angles to be at most $90^\circ$ (as in orthogonal drawings~\cite{EigFekKla-DG-01} and RAC drawings~\cite{AngCitDiB-GD-09,DidEadLio-WADS-09,DujGudMor-09}) or more generally by attempting to optimize the \emph{angular resolution} of a drawing, the minimum angle that can be found within the drawing~\cite{CarEpp-GD-06,GarTam-ESA-94,GutMut-GD-97,Mal-STOC-92}.

Three-dimensional graph drawing opens new frontiers for angular resolution in two ways. First, in three-dimensional graph drawing, there is no need for crossings, as any graph can be drawn without crossings; however, finding a compact layout that uses few bends and avoids crossings can sometimes be challenging. Second, and more importantly, in 3d there is a much greater variety in the set of ways that a collection of edges can meet at a vertex to achieve good angular resolution, and the angular resolution that may be obtained in 3d is often better than that for a two-dimensional drawing. For instance, in 3d, six edges may meet at a vertex forming angles of at most $90^\circ$, whereas in 2d the same six edges would have an angular resolution of $60^\circ$ at best.

The problem of optimizing the angular resolution of a collection of edges incident to a single vertex in 3d is equivalent to the well-known \emph{Tammes' problem} of placing points on a sphere to maximize their minimum separation; this problem is named after botanist P.~M.~L.~Tammes who studied it in the context of pores on grains of pollen~\cite{Tam-RTBN-30}, and much is known about it~\cite{ClaKep-PRSL-86}.
For graphs of degree five or six, the optimal angular resolution of a three-dimensional drawing is $90^\circ$, as above, achieved by placing vertices on a grid and drawing all edges as grid-aligned polylines.  The simplicity of this case has freed researchers to look for three-dimensional orthogonal drawings that, as well as optimizing the angular resolution, also optimize secondary criteria such as the number of bends per edge, the volume of the drawing, or combinations of both~\cite{BieThiWoo-Algo-06,EadSymWhi-DAM-00,Woo-TCS-03}. Thus, in this case, it is known that the graph may be drawn with at most 3 bends per edge in an $O(n)\times O(n)\times O(n)$ grid and with $O(1)$ bends per edge in an $O(\sqrt n)\times O(\sqrt n)\times O(\sqrt n)$ grid~\cite{EadSymWhi-DAM-00}. For graphs of maximum degree five a tighter bound of two bends per edge is also known~\cite{Woo-TCS-03}; a well known open problem asks whether the same two-bend-per-edge bound may be achieved for degree six graphs~\cite{Dem-TOPP-02}.

In two dimensions, $90^\circ$ angular resolution is optimal for graphs
that may be nonplanar, because every crossing has an angle at least
this sharp. However, in 3d, crossings are no longer a concern and
graphs of degree three and four may have angular resolution even
better than $90^\circ$. In particular, in the \emph{diamond lattice},
a subset of the integer grid, the edges are parallel to the long
diagonals of the grid cubes and meet at angles of
$\arccos(-1/3)\approx 109.5^\circ$, the optimal angular resolution for
degree-four graphs (Figure~\ref{fig:spacefillers}, left). For graphs
with maximum degree three, the best possible angular resolution at any
vertex is clearly $120^\circ$; three edges with these angles are
coplanar, but the planes of the edges at adjacent vertices may differ:
for instance, Figure~\ref{fig:spacefillers}(right) shows an infinite space-filling graph in which all vertices are on integer grid points, all edges form face diagonals of the integer grid, and all vertices have $120^\circ$ angular resolution.

\begin{figure}[t]
\centering
\includegraphics[width=2.25in]{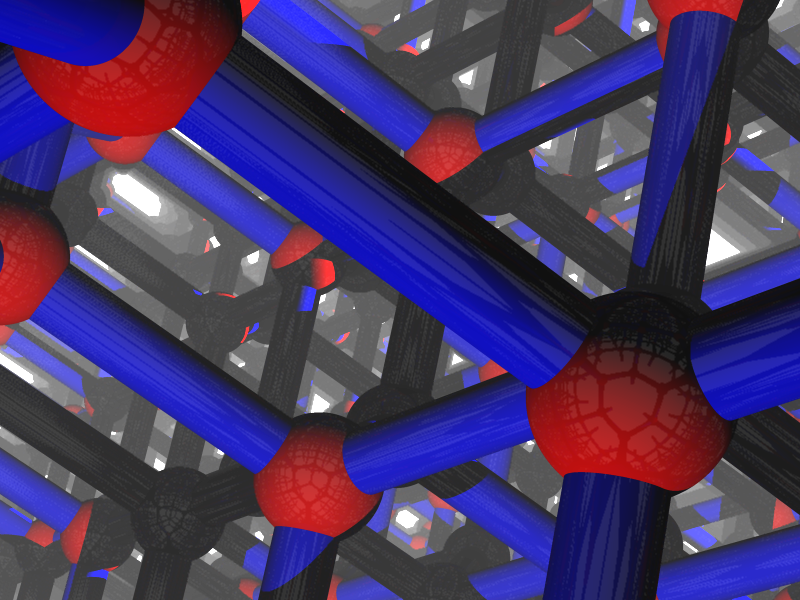}\qquad
\includegraphics[width=2.25in]{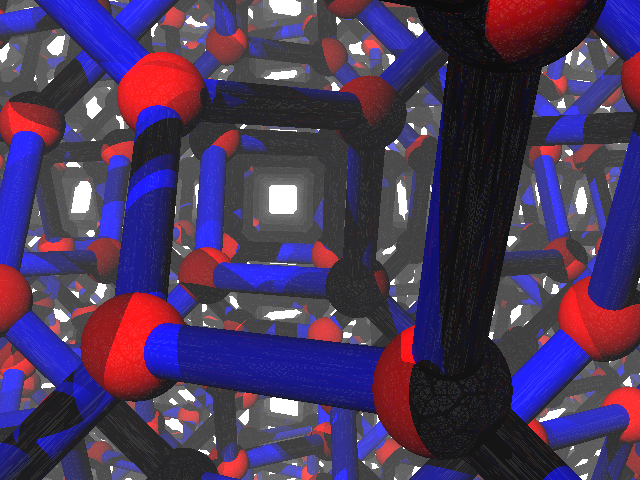}
\caption{Left: The three-dimensional diamond lattice, from~\cite{Epp-GD-08}. Right: A space-filling 3-regular graph with $120^\circ$ angular resolution.}
\label{fig:spacefillers}
\end{figure}

The primary questions we study in this paper are how to achieve optimal $120^\circ$ angular resolution for 3d drawings of arbitrary graphs with maximum degree three, and optimal $109.5^\circ$ angular resolution for 3d drawings of arbitrary graphs with maximum degree four.  We define angular resolution to be the minimum angle at any bend or vertex, matching the orthogonal drawing case, and we do not allow edges to cross. These questions are not difficult to solve without further restrictions (just place the vertices arbitrarily and use polylines with many bends to connect the endpoints of each edge) so we further investigate drawings that minimize the number of bends, align the vertices and edges of the drawing with the integer grid similarly to the alignment of the spacefilling patterns in Figure~\ref{fig:spacefillers}, and use a small total volume. We show:
\begin{itemize}
\item Any graph of maximum degree four can be drawn in 3d with optimal $109.5^\circ$ angular resolution with at most three bends per edge, with all vertices placed on 
  an $O(n)\times O(n)\times O(n)$ grid and with all edges parallel to the long diagonals of the grid cubes.
\item Any graph of maximum degree three can be drawn in 3d with optimal $120^\circ$ angular resolution with at most two bends per edge. However, our technique for achieving this small number of bends does not use a grid placement and does not achieve good volume bounds.
\item Any graph of maximum degree three has a drawing with $120^\circ$ angular resolution, integer vertex coordinates, edges parallel to the face diagonals of the integer grid, at most three bends per edge, and polynomial volume.
\end{itemize}
We believe that, as in the orthogonal case, it should be possible to achieve tighter bounds on the volume of the drawing at the expense of greater numbers of bends per edge.

\section{Three-bend drawings of degree-four graphs on a grid}
\label{sec:deg-four}

Our technique for three-dimensional drawings of degree-four graphs with angular resolution $109.5^\circ$ and three bends per edge is based on lifting two-dimensional drawings of the same graphs, with angular resolution $90^\circ$ and two bends per edge. The three-dimensional vertex placements are all on the plane $z=0$, essentially unchanged from their two-dimensional placements, but the edges are raised and lowered above and below the plane to avoid crossings and improve the angular resolution.

Our two-dimensional orthogonal drawing technique uses ideas from
previous work on drawing degree-four graphs with bounded geometric
thickness~\cite{DunEppKob-SCG-04}.  We begin by augmenting the graph
with dummy edges and a constant number of dummy vertices if necessary
to make it a simple 4-regular graph, find an Euler tour in the
augmented graph, and color the edges alternately red and green in
their order along this path. In this way, the red edges and the green
edges each form 2-regular subgraphs~\cite{Pet-AM-91} consisting of
disjoint unions of cycles. We denote the number of red (green) cycles
by $m_\red$ ($m_\green$). 

Next, we draw the red subgraph so that every cycle passes horizontally
through its vertices with two bends per edge, and we draw the green
subgraph so that every cycle passes vertically through its vertices
with two bends per edge. We can do that by using the cycle ordering
within each of these two subgraphs as one of the two Cartesian
coordinates for each point. More precisely, we do the
following. 

We define the \emph{green order} of the vertices of the graph to be an
order of the vertices such that the vertices of each green path or
cycle are consecutive; we define the \emph{red order} the same
way. Let $r_{\green}(v)\geq 0$ be the rank of a vertex $v$ in some
green order, and $r_{\red}(v)$ be its rank in some red order. We
further order the red and green cycles and define $c_{\red}(v) \geq 0$
and $c_{\green}(v) \geq 0$ to be the ranks in the two cycle orders of
the red and green cycles to which $v$ belongs. We embed the vertices
on a $(2n + 2m_\green - 4) \times (2n + 2m_\red - 4)$ grid such that
the $x$-coordinate of each vertex is $2r_{\green}(v) + 2c_\green(v)$,
and its $y$-coordinate is $2r_{\red}(v) + 2c_\red(v)$.

Let $v_1,... v_k$ be the vertices of a green cycle $C$ in the green
order. We embed $C$ as follows. We mark each end of each edge with a
plus or a minus such that at every vertex exactly one end is marked
with a plus and exactly one with a minus. We then would like to embed
$C$ in such a way that plus would correspond to the edge
entering the vertex from above and a minus corresponds to the edge
entering the vertex from below. Note that every edge whose two ends
are marked the same can be embedded in this way with two bends. Whenever
the marks alternate along the edge one can only embed it with two
bends if the lower end (the end incident to a vertex with smaller
$y$-coordinate) is marked with plus.

We next describe how to label $C$ so that it has a 2-bends-per-edge
embedding respecting the labeling. If $k$ is even, we mark both ends
of the edge $(v_1,v_2)$ with pluses. If $k$ is odd, we mark the higher
end of $(v_1,v_2)$ with a minus and its lower end with a plus. In both
cases there is a unique way to label the rest of the edges such that
both ends of each edge have the same signs and the labels alternate at
every vertex.

To complete our 2d embedding we draw all edges consistently with the
labeling as follows. Each edge $(v_i, v_{i+1})$ is placed such that
the $y$-distance of its horizontal segment to one of the vertices is
1. If the last edge $(v_1, v_k)$ is labeled negatively, its
horizontal segment is drawn on the grid line one unit below the
lowest vertex or bend of $C$. Similarly, if $(v_1, v_k)$ is labeled
positively, the horizontal segment is drawn one unit above the highest
part of $C$. See Figure~\ref{fig:2d-deg-4} for an illustration.

\begin{figure}[t]
\centering
\ifArxiv
\includegraphics[width=3.5in]{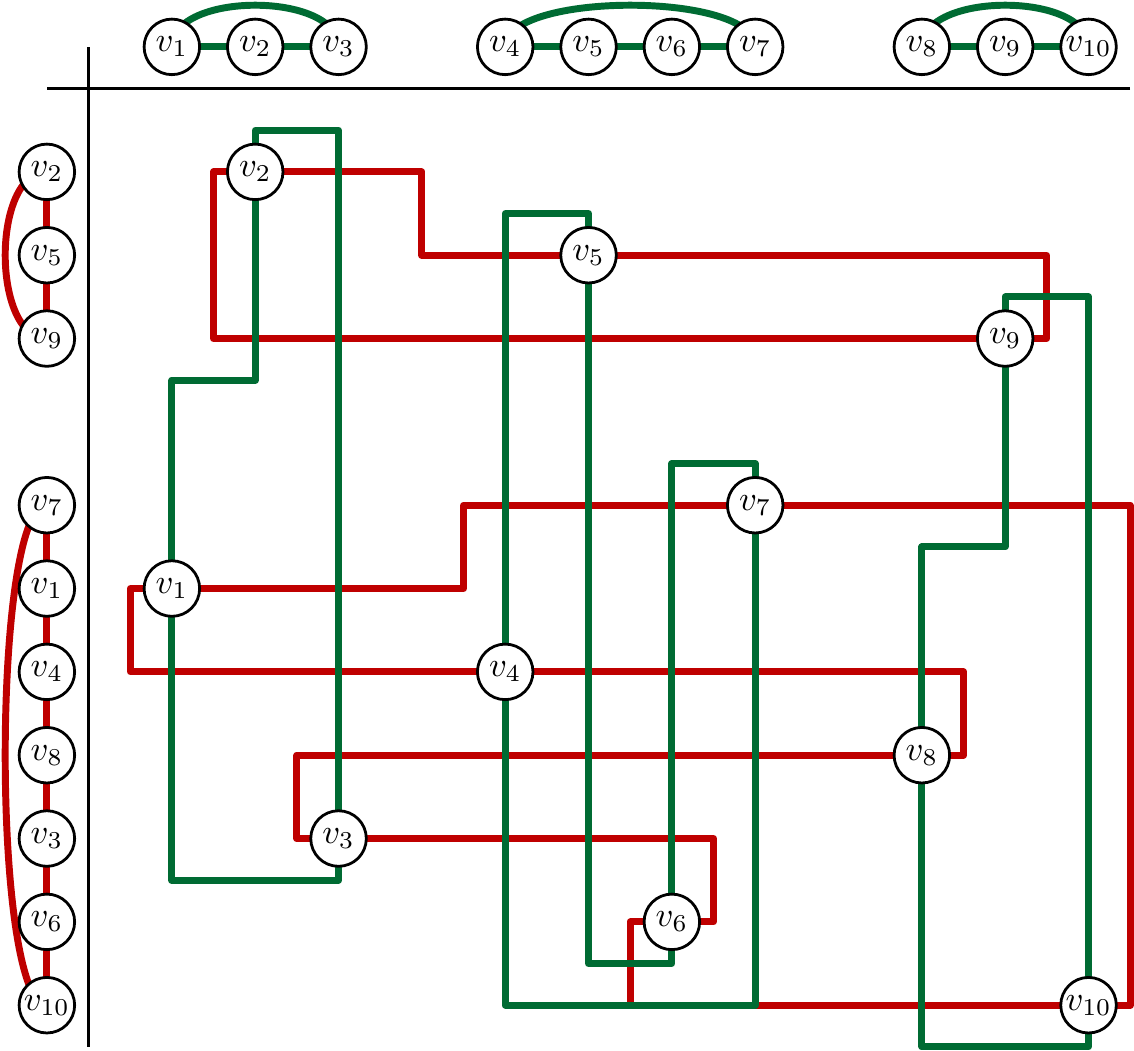}
\else
\includegraphics[width=2.25in]{example-deg4}
\fi
\caption{A 4-regular graph with 10 vertices embedded according to the
  decomposition into disjoint red and green cycles.}
\label{fig:2d-deg-4}
\end{figure}

\begin{lemma}\label{lem:2d-drawing}
  The embedding described above has the following
  properties:
  \begin{itemize}
  \item no two edges of the same color intersect;
  \item a vertex lies on an edge if and only if it is incident to the edge;
  \item no midpoint of an edge coincides with a bend of the edge;
  \item the embedding fits on a $(2n + 2m_\green) \times (2n + 2m_\red)$ grid.
  \end{itemize}
\end{lemma}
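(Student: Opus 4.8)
The plan is to verify the four properties one at a time, with almost all of the work resting on one structural observation about the vertex placement. I would first record the \emph{distinctness fact}: taking the cycle order to agree with the order in which the cycles appear along the green order, the quantity $r_\green(v)+c_\green(v)$ increases by exactly $1$ at each step within a green cycle and by $2$ when the green order crosses from one cycle to the next. Hence $v\mapsto r_\green(v)+c_\green(v)$ is strictly increasing along the green order, so the $x$-coordinates $2r_\green(v)+2c_\green(v)$ are pairwise distinct, and consecutive green cycles are separated by an empty column (a gap of $4$). The symmetric statement holds for the $y$-coordinates and the red cycles. Thus every vertical grid line carries at most one vertex, every horizontal grid line at most one vertex, and each green cycle lives in its own vertical strip, each red cycle in its own horizontal strip.

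Property 4 then follows by arithmetic: the largest vertex $x$-coordinate is $2(n-1)+2(m_\green-1)$, and the only drawing features lying outside the vertices' bounding box are the horizontal segments of the wrap-around edges, each drawn one unit beyond the extreme vertex of its cycle; counting these extra rows and columns yields the stated $(2n+2m_\green)\times(2n+2m_\red)$ grid. Property 2 is handled by the distinctness fact together with a parity remark: every vertex sits on even coordinates, while each horizontal segment of a green edge is placed at $y$-distance $1$ from a vertex and so lies on an odd grid line, meeting no vertex at all; and each vertical segment of a green edge lies on a line $x=x_i$, which by distinctness contains only the incident endpoint $v_i$. The red case is symmetric, so no edge passes through a non-incident vertex. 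Property 3 is likewise short: the middle segment of every edge (horizontal for green, vertical for red) has length at least $2$ --- exactly $2$ for an edge joining two $x$-adjacent vertices and longer for a wrap-around edge --- so the edge's midpoint lies strictly between its two bends.

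The substance is Property 1, where by color symmetry it suffices to treat the green edges. Green edges from different cycles lie in disjoint vertical strips and so cannot meet, reducing the claim to showing that the polyline drawing of a single green cycle $C=v_1,\dots,v_k$ is a simple closed curve. The interior edges $(v_i,v_{i+1})$ each join two vertices in adjacent columns by a three-segment staircase, and the role of the alternating labels is local: at each internal vertex the two incident green edges carry opposite signs, so one leaves the vertex upward and the other downward, and their vertical segments (which share the column of the vertex) lie on opposite sides and do not overlap. The wrap-around edge $(v_1,v_k)$ is routed entirely above or entirely below all of $C$ according to its sign, so its horizontal part crosses nothing; the only thing left to check is that at each of $v_1$ and $v_k$ it attaches on the side opposite to the adjacent interior edge.

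That last point is where I expect the real work, and it is exactly what the parity split in the labeling rule is designed to guarantee: the choice of pluses on $(v_1,v_2)$ when $k$ is even, versus a plus on the lower end when $k$ is odd, is what makes the alternation of signs close up consistently all the way around the cycle so that the wrap-around edge does not collide with the interior edges at the two extreme columns. I would prove this by induction along the green order, maintaining the invariant that after drawing $(v_1,v_2),\dots,(v_{i-1},v_i)$ the partial drawing is an $x$-monotone simple path whose exposed end at $v_i$ has the sign forced by the labeling rule, and then verifying in the two parity cases that the final wrap-around edge closes the path into a simple cycle.
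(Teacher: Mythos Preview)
Your treatment of Properties~1, 2, and~4 is fine and essentially matches the paper's, but your argument for Property~3 is a genuine gap. You write that ``the middle segment of every edge \dots\ has length at least $2$ \dots\ so the edge's midpoint lies strictly between its two bends.'' This inference is false: the length of the horizontal segment tells you nothing about where the midpoint of the whole polyline falls. For a green edge $(v_i,v_{i+1})$ with both ends labelled the same, the two vertical legs have lengths $1$ and $|\Delta y|+1$ (where $\Delta y$ is the $y$-distance of the endpoints) while the horizontal leg has length~$2$. If $|\Delta y|$ is large the midpoint sits on the long vertical leg, not on the horizontal segment at all; and a short computation shows the midpoint coincides with a bend precisely when $|\Delta y|=2$, i.e., when the horizontal and vertical distances are equal. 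So the property is \emph{not} automatic from the construction.

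The paper's proof supplies the missing ingredients you omitted. First, for non--wrap-around edges the horizontal distance is $2$, and $|\Delta y|=2$ would force $v_i$ and $v_{i+1}$ to be consecutive in a red cycle as well as in the green one, giving a double edge and contradicting simplicity of the augmented $4$-regular graph; this is also the reason the construction leaves a gap of~$4$ between consecutive cycles. Second, for the wrap-around edge $(v_1,v_k)$ the horizontal distance is $2k-2$, and it \emph{can} happen that the vertical distance is also $2k-2$; the paper fixes this by cyclically shifting the green order of $C$ until the last edge no longer has this property, which is possible because each $y$-coordinate is used by only one vertex. None of this appears in your sketch: you never invoke simplicity, and you never allow yourself to re-choose the vertex orders. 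Without these two ideas Property~3 simply fails.
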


\begin{proof}
  Green edges connecting consecutive vertices in the green order of
  the same cycle $C$ are trivially disjoint. The horizontal segment of
  the edge connecting the first and the last vertex of $C$ is placed
  below or above all other edges of $C$. Two different green
  components are disjoint because the edges of every component are
  contained inside the vertical strip defined by its first and last
  vertices and components are ordered along the $x$-axis. The argument
  for red edges is symmetric.

  Since all the vertices have distinct $x$-coordinates, and every green
  vertical segment has a vertex at one of its ends we can conclude
  that every vertex is incident to at most two vertical green
  segments. Every green horizontal segment has odd $y$-coordinate
  and every vertex has even $y$-coordinate hence a green horizontal
  segment cannot contain a vertex. The argument for red edges is
  symmetric.


  For arbitrary red and green vertex orders it is possible that the
  midpoint of an edge coincides with one of its bends. We show that
  there are red and green vertex orders for which this is not the case. 
  For any edge whose ends are labeled differently we can always place
  the horizontal segment such that the midpoint of the edge does not
  coincide with a bend. For edges whose ends have the same label it is
  easy to see that the midpoint coincides with a bend if and only if
  the vertical distance and the horizontal distance of its vertices
  are equal. Apart from the last edge in each green cycle the
  horizontal distance between any two adjacent vertices $v_i$ and
  $v_{i+1}$ is 2. We claim that the vertical distance between $v_i$
  and $v_{i+1}$ is larger than 2 since otherwise $v_i$ and $v_{i+1}$
  are adjacent in a red cycle which contradicts the assumption that
  the 4-regular graph is simple. Note that this is the reason why
  different components are spaced by at least 4 units. Finally
  consider the last edge $(v_1,v_k)$ of a cycle $C$ with vertices
  $v_1, \dots, v_k$. The horizontal distance of $v_1$ and $v_k$ is
  $2k-2$. If their vertical distance equals $2k-2$ as well, we
  cyclically shift the green order of the vertices in $C$ by moving
  $v_k$ to the vertical grid line of $v_1$ and shifting each of $v_1,
  \dots, v_{k-1}$ two units to the right. Now $(v_{k}, v_{k-1})$ is
  the last edge of $C$. We perform this shifting until the vertices of
  the last edge no longer have vertical distance $2k-2$. Since every
  vertex has an exclusive $y$-coordinate there is at least one edge
  with this property in $C$. The local shifting of $C$ does not
  influence other parts of the drawing. The argument for red cycles is
  analogous.

  The vertices lie on $(2n+2m_\green-4) \times (2n+2m_\red-4)$ grid,
  and each grid line with coordinate $2k$ contains exactly one
  vertex. The lowest vertex is incident to a green edge with a
  horizontal segment at the height $-1$; the highest one is incident
  to a green edge with a horizontal segment at the height
  $2n+2m_\red-3$. One of the green edges connecting the first and last
  vertices of some cycle can lie one grid line below the height $-1$ or
  one grid lines above $2n+2m_\red-3$.
\end{proof}

It remains to lift the 2d drawing described above into three
dimensions. We first rotate the drawing by $45^\circ$; this expands
the grid size to $(4n+4m_\green)\times (4n+4m_\red)$. The vertices
themselves stay in the plane $z=0$, but we replace each edge by a path
in 3d that goes below the plane for the red edges and above the plane
for the green edges, eliminating all crossings between red and green
edges. The path for a green edge goes upwards along the long diagonals
of the diamond lattice cubes until its midpoint, where it has a bend
and turns downwards again. The lifted images of the two bends in the
underlying 2d edge remain bends in the 3d path and hence we get three
bends per edge in total. The red edges are drawn analogously below the
plane $z=0$. Since in the original 2d drawing every edge has even
length, the midpoint of every edge is a grid point and hence the
lifted midpoint is also a grid point of the diamond lattice. By
Lemma~\ref{lem:2d-drawing} a midpoint of an edge never coincides with
a 2d bend and hence all bend angles as well as the vertex angles are
$109.5^\circ$ diamond lattice angles.  Finally, we remove all the
edges we added to make the graph 4-regular. Considering the longest
possible red and green edges the total grid size is at most
$(4n+4m_\green)\times (4n+4m_\red)\times (12n+6m_\green+6m_\red)$.  We
note that $m_\green$, $m_\red \le n/3$ since every component is a
cycle. This yields the following theorem.

\begin{theorem}
  Any graph $G$ with maximum vertex degree four can be drawn in a 3d grid
  of size $16n/3 \times 16n/3 \times 16n$ with angular resolution
  $109.5^\circ$, three bends per edge and no edge crossings.
\end{theorem}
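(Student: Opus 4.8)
The plan is to obtain the theorem as the geometric payoff of the construction already assembled above: take the planar drawing guaranteed by Lemma~\ref{lem:2d-drawing}, rotate it, lift it into the diamond lattice, and then verify that each claimed property survives while computing the resulting grid size. Since the hard combinatorial work---realizing every monochromatic cycle with two bends per edge and no same-color crossings---is already packaged in Lemma~\ref{lem:2d-drawing}, what remains is essentially geometric bookkeeping together with one angle check.

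First I would fix the 2d drawing supplied by Lemma~\ref{lem:2d-drawing}, which gives that no two same-color edges cross, that a vertex lies on an edge only when incident to it, that no midpoint coincides with a bend, and that the whole figure fits on a $(2n+2m_\green)\times(2n+2m_\red)$ grid. I then apply the $45^\circ$ rotation, which turns every axis-parallel segment into a $\pm45^\circ$ segment and, after the scaling needed to stay integral, expands the footprint to $(4n+4m_\green)\times(4n+4m_\red)$. The purpose of the rotation is that once an edge is lifted, each of its straight segments becomes parallel to a long diagonal $(\pm1,\pm1,\pm1)$ of a unit cube of the diamond lattice.

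Next comes the lifting: green edges are raised into the halfspace $z>0$ and red edges into $z<0$. Each green edge climbs along long diagonals to its midpoint, bends once, and descends symmetrically, so the two original planar bends together with the single new apex bend give exactly three bends per edge. Because every edge has even length in the planar drawing, its midpoint is a grid point, so the apex lands on a diamond-lattice point; and because Lemma~\ref{lem:2d-drawing} guarantees that no midpoint coincides with a planar bend, the apex bend is always distinct from the two lifted bends. This is the crux of the argument and the step I expect to be the main obstacle to state cleanly: one must check that at every vertex and at each of the three bends the two incident straight segments are long cube diagonals meeting at the diamond-lattice angle $\arccos(-1/3)=109.5^\circ$. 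This is precisely where the no-midpoint-on-a-bend property is used---if the apex coincided with a planar bend, the two segments there would merge and the resulting angle could degrade.

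Finally I would confirm that no crossings are introduced: same-color pairs stay disjoint by Lemma~\ref{lem:2d-drawing}, while a red and a green edge can meet only at a shared endpoint in the plane $z=0$, since their interiors lie in opposite open halfspaces. Computing the third dimension from the longest possible lifted edges gives a grid of size $(4n+4m_\green)\times(4n+4m_\red)\times(12n+6m_\green+6m_\red)$; substituting the bound $m_\green,m_\red\le n/3$ (each monochromatic component is a cycle on at least three vertices, and these cycles partition the vertex set) yields $16n/3\times16n/3\times16n$. Removing the dummy edges and vertices added to make $G$ 4-regular only deletes features and therefore preserves every property, which completes the proof.
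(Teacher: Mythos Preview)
Your proposal is correct and follows essentially the same approach as the paper: rotate the planar drawing from Lemma~\ref{lem:2d-drawing} by $45^\circ$, lift green edges above and red edges below the plane $z=0$ with a single apex bend at each midpoint, invoke the no-midpoint-on-a-bend property to certify the $109.5^\circ$ angles, and plug $m_\green,m_\red\le n/3$ into the resulting $(4n+4m_\green)\times(4n+4m_\red)\times(12n+6m_\green+6m_\red)$ bound. Your write-up even makes explicit a couple of points (the integrality scaling after rotation, the halfspace argument for red--green non-crossing) that the paper leaves implicit.
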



\section{Two-bend drawings of degree-three graphs} \label {sec:deg3-2bends}

The main idea of our algorithm for drawing degree-three graphs with optimal angular resolution and at most two bends per edge is to decompose the graph into a collection of vertex-disjoint cycles. Each cycle of length four or more can be drawn in such a way that the edges incident to the cycle all attach to it via segments that are parallel to the $z$ axis (Lemma~\ref{lem:single-cycle}). By placing the cycles far enough apart in the $z$ direction, these segments can be connected to each other with at most two bends per edge. However, several issues complicate this method:
\begin{itemize}
\item Cycles of length three cannot be drawn in the same way, and must be handled differently (Lemma~\ref{lem:triangle}).
\item Our method for eliminating cycles of length three does not apply to the graph $K_4$, for which we need a special-case drawing (Lemma~\ref{lem:k4}).
\item Although Petersen's theorem~\cite{BieBosDem-Algs-01,Pet-AM-91} can be used to decompose any bridgeless cubic graph into cycles and a matching, it is not suitable for our application because some of the matching edges may connect two vertices in a single cycle, a case that our method cannot handle. In addition, we wish to handle graphs that may contain bridges. Therefore, we need to devise a different decomposition algorithm. However, with our decomposition, the complement of the cycles is a forest rather than just a matching, and again we need additional analysis to handle this case.
\end{itemize}

\begin{figure}[t]
\centering\includegraphics[width=0.6\textwidth]{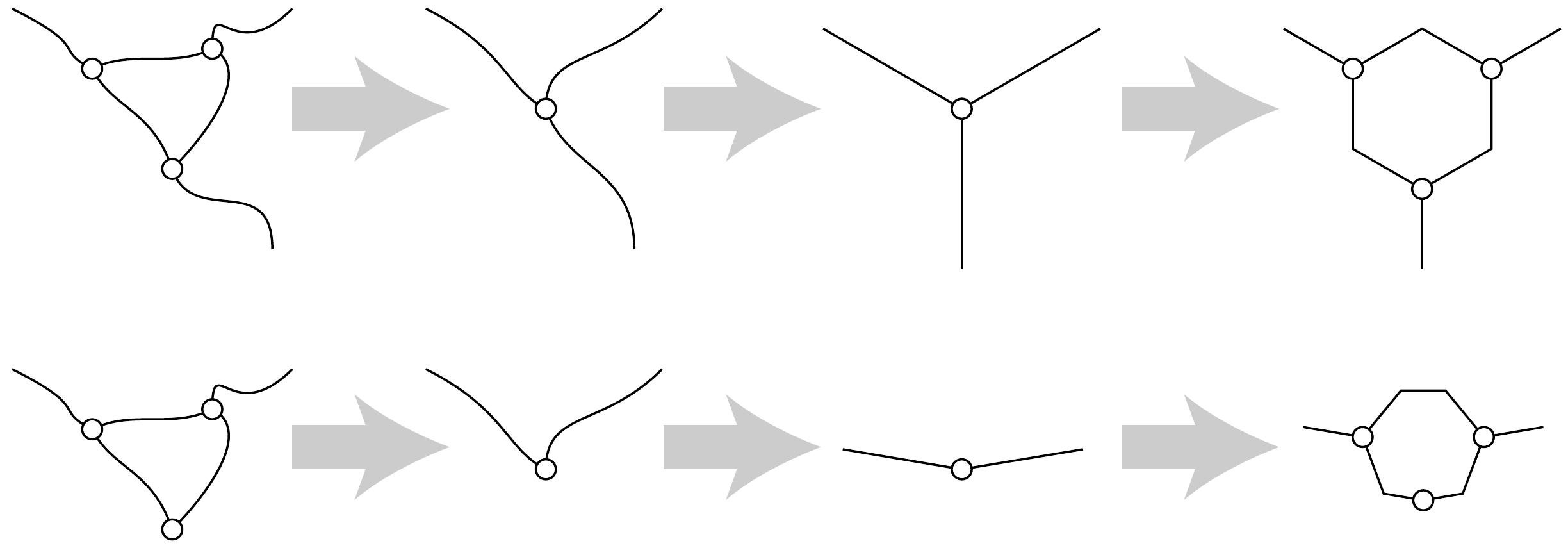}
\caption{$\Delta$--Y transformation of a graph $G$ containing a triangle, and undoing the transformation to find a drawing of $G$ (Lemma~\ref{lem:triangle}). Top: the contracted vertex has degree three, and is replaced by a hexagon. Bottom: the contracted vertex has degree two, and is replaced by a heptagon.}
\label{fig:DeltaY}
\end{figure}

\begin{lemma}
\label{lem:triangle}
Let $G$ be a graph with maximum degree three containing a triangle $uvw$. If $uvw$ is not part of any other triangle, let $G'$ be the result of contracting $uvw$ into a single vertex (that is, performing a $\Delta$--Y transformation on $G$). Otherwise, if there is a triangle $vwx$, let $G'$ be the result of contracting $uvwx$ into a single vertex.
If $G'$ can be drawn in 3d with two bends per edge and with angles of at least $120^\circ$ between the edges at each vertex or bend, then so can $G$.
\end{lemma}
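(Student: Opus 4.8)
The plan is to start from a given drawing $D'$ of $G'$ and perform local surgery at the single vertex $t$ produced by contracting the triangle (or the pair of triangles), replacing $t$ and the initial portions of its incident edges by a small gadget that redraws the triangle (in the $\Delta$--Y case) or the diamond $K_4-ux$ (in the two-triangle case). First I would isolate a ball $B$ around $t$ so small that it meets no other vertex, no bend, and no edge not incident to $t$; inside $B$ the drawing consists only of $t$ together with straight initial stubs of its incident edges. Since angles are scale invariant, the gadget may be built at any scale, so I will shrink it to fit inside $B$; this guarantees that reinserting it creates no crossings with the rest of $D'$, and the untouched parts of the incident edges keep their at most two bends. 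The only thing the gadget must respect is the set of \emph{directions} in which $t$'s edges leave $t$: if each reconnected edge leaves its new endpoint along the same ray on which it left $t$, I can truncate its terminal stub at the new endpoint without spending an extra bend.

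Before building gadgets I would record the rigidity forced by the $120^\circ$ requirement. For three unit vectors $a,b,c$ with all pairwise angles at least $120^\circ$ one has $|a+b+c|^2 = 3 + 2(a\cdot b + b\cdot c + c\cdot a) \ge 0$, while each dot product is at most $-1/2$; hence the three dot products sum to exactly $-3/2$, each equals $-1/2$, and $a+b+c=0$, so all three angles are exactly $120^\circ$ and the edges are coplanar. Thus when $t$ has degree three ($\Delta$--Y case) its three stubs lie in a common plane $\Pi$ and point in directions $d_1,d_2,d_3$ spaced at exactly $120^\circ$; when $t$ has degree two (two-triangle case) its two stubs span a plane and meet at some angle $\theta\in[120^\circ,180^\circ]$.

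For the $\Delta$--Y case I would replace $t$ by a small regular hexagon drawn in $\Pi$ and centred at $t$, with $u$, $v$, $w$ at three alternating corners and a bend at each of the other three, so each triangle edge becomes a two-segment path with a single bend. A regular hexagon has interior angle $120^\circ$, giving $120^\circ$ at every bend and between the two triangle edges meeting at each of $u,v,w$; moreover the outward angle bisector at a hexagon corner makes $120^\circ$ with both incident sides, so attaching the external edge of $u$ (resp.\ $v$, $w$) along that bisector gives $120^\circ$ everywhere at that vertex. Because the three outward bisectors of the alternating corners are themselves $120^\circ$ apart, I can rotate the hexagon within $\Pi$ so that these bisectors coincide with $d_1,d_2,d_3$; placing $u,v,w$ on the respective rays then lets each external edge be truncated to its new endpoint with no direction change and no new bend. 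This case is routine once the rigidity observation is in hand.

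The main obstacle is the two-triangle case, where I must exhibit a crossing-free drawing of the diamond $K_4-ux$ (vertices $u,v,w,x$, edges $uv,uw,vw,vx,wx$) with $120^\circ$ at every vertex and bend, at most two bends per edge, and---crucially---with its two external stubs, at $u$ and at $x$, meeting at the prescribed angle $\theta$. Here every one of $u,v,w,x$ already has degree three, so by the rigidity observation each is locally planar with its three edges frozen at $120^\circ$, leaving little freedom. A symmetric planar drawing (edge $vw$ straight, the four remaining edges each bent once, with $u$ and $x$ mirror images across the line of $vw$) forces the two external stubs to be antiparallel, i.e.\ $\theta=180^\circ$; to realise smaller $\theta$ I would fold the two ``wings'' out of the plane of $vw$, preserving the local $120^\circ$ coplanarity at each vertex, and argue by continuity that as the fold angle varies the external angle sweeps a full interval reaching down to at most $120^\circ$. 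Verifying that this one-parameter family (the ``heptagon'' of Figure~\ref{fig:DeltaY}) remains simple, keeps at most two bends per edge, and attains every $\theta\in[120^\circ,180^\circ]$ is the delicate step; once it is available, I align the gadget so that $u$ and $x$ lie on the two stub rays of $t$, truncate the incident edges as in the first case, and obtain the desired drawing of $G$.
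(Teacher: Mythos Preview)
Your local-surgery framework and the hexagon gadget for the degree-three $\Delta$--Y situation are exactly what the paper does, and your rigidity computation (three directions at pairwise angle $\ge 120^\circ$ are forced to be coplanar and equiangular) is the right way to see why the hexagon aligns. The problem is your case split: you equate ``$\Delta$--Y case'' with ``$t$ has degree three'' and ``two-triangle case'' with ``$t$ has degree two.'' Neither implication holds. In the $\Delta$--Y contraction each of $u,v,w$ contributes at most one external edge, so the merged vertex may have degree $0$, $1$, $2$, or $3$; in the two-triangle contraction only $u$ and $x$ can have external neighbours, so $t$ has degree $0$, $1$, or $2$. The substantive case you miss is the $\Delta$--Y contraction with $\deg t=2$ and the two stubs meeting at an angle strictly greater than $120^\circ$: your hexagon cannot be rotated to align its outward bisectors (which are fixed at $120^\circ$ apart) with two stub directions that are farther apart than that, so the gadget fails there. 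This is precisely the situation the paper covers with the small heptagon of Figure~\ref{fig:DeltaY} (bottom); that heptagon is a drawing of a single triangle with two external attachments at angle $\theta>120^\circ$, not of the diamond $K_4-ux$ as you suggest.

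For the genuine two-triangle case the paper does something simpler than your folding family: it reuses the single-triangle gadget twice, drawing $uvw$ and $vwx$ as two hexagons (or two irregular heptagons, when $\theta>120^\circ$) glued along the side representing $vw$, so that $vw$ is a straight shared edge and each of the other four edges gets one or two bends. This sidesteps the delicate continuity argument you flag; your fold-the-wings idea may well work, but note that rigidity at $v$ and at $w$ forces all three edge directions at each of those vertices to be coplanar, so the ``fold'' has to be a dihedral rotation of the two local planes about the line of $vw$ rather than an arbitrary bending, and you would still need to check that $u$'s and $x$'s local planes and stub directions come out consistently. Fixing the case analysis (and in particular supplying the heptagon for the degree-two $\Delta$--Y sub-case) is what is actually missing.
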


\begin{proof}
First we consider the case that $G'$ is obtained by collapsing $uvw$.
The edges incident to the merged vertex $uvw$ must lie in a plane in
any drawing of $G'$. If $uvw$ has degree zero, one, or three in $G'$,
or if it has degree two and is drawn with angular resolution exactly
$120^\circ$, then we may draw $G$ by replacing $uvw$ by a small
regular hexagon in the same plane, with at most one bend for each of
the three triangle edges (Figure~\ref{fig:DeltaY}, top). If the merged
vertex $uvw$ has degree two in $G'$ and is drawn with angular
resolution greater than $120^\circ$, we may replace it by a small
heptagon (Figure~\ref{fig:DeltaY}, bottom).

The case that $G'$ is obtained by collapsing four vertices $uvwx$ is similar:
the collapsed vertex may be replaced by a pair of regular hexagons or irregular heptagons, meeting edge-to-edge. The four vertices $uvwx$ are placed at the points where these two polygons meet the other edges of the drawing and the two endpoints of the edge where they meet each other; the edge $vw$ has no bends and the other edges all have one or two bends.
\end{proof}

\begin{lemma}
\label{lem:k4}
The graph $K_4$ may be drawn in 3d with all vertices on integer grid points, angular resolution $120^\circ$, and at most two bends per edge.
\end{lemma}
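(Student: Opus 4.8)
\noindent The plan is to exhibit one explicit, highly symmetric drawing and then verify the three required properties by symmetry rather than by routing each of the six edges independently. I would place the four vertices at the four alternating corners of an axis-aligned cube, say $A=(0,0,0)$, $B=(s,s,0)$, $C=(s,0,s)$, $D=(0,s,s)$ for a suitable even scale $s$. These four points form a regular tetrahedron whose six pairwise connecting directions are exactly face diagonals of the integer grid, and the whole configuration is invariant under the tetrahedral symmetry group acting on the grid and permuting face-diagonal directions (for instance the coordinate rotation $(x,y,z)\mapsto(z,x,y)$ fixes $A$ and cyclically maps $B\to D\to C\to B$). The straight-line drawing on these four points has only $60^\circ$ angular resolution, so all the work lies in rerouting each edge with bends.

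At each vertex $v$ the three incident edges must leave along a triad of face diagonals that sum to zero, since three equal-length vectors summing to zero are coplanar and pairwise at $120^\circ$; as this is the best possible resolution at a degree-three vertex, every vertex angle will necessarily be forced to exactly $120^\circ$. I would take this triad to be the three face diagonals lying in the plane perpendicular to the long diagonal joining $v$ to the centroid of the tetrahedron, e.g. at $A$ the plane perpendicular to $(1,1,1)$, giving the triad $\{(1,-1,0),(0,1,-1),(-1,0,1)\}$, with the analogous canonical choice at $B$, $C$, $D$. Because each prescribed outgoing direction points sideways rather than straight toward the far endpoint, every edge needs at least one bend. I would realise each edge as a polyline of at most three face-diagonal segments whose first direction is the prescribed triad direction at one endpoint, whose last direction is the negative of the prescribed triad direction at the other endpoint, and whose consecutive directions meet at $120^\circ$ (the only bend angle available to non-collinear, non-antiparallel face diagonals, hence automatically $\geq 120^\circ$). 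Matching the total displacement $B-A$ is then a small linear feasibility problem in the positive segment lengths, with the middle direction, when a second bend is needed, chosen as the face diagonal at $120^\circ$ from both end directions.

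Since the symmetry group acts transitively on the six edges of the tetrahedron, it suffices to find a correct route for a single representative edge: transporting it by the group yields all six edges, automatically produces the required $120^\circ$ triad at every vertex, and makes all bend angles $120^\circ$. The main obstacle is precisely this single route, namely threading an at-most-two-bend face-diagonal polyline from one endpoint to the other so that it departs and arrives along the prescribed triad directions while every turn stays at $120^\circ$; checking that such a route exists and is compatible with (rather than fixed pointwise by) the edge's own stabiliser in the symmetry group is the delicate step that pins the construction down, and is most naturally settled by writing out explicit coordinates. The remaining property, absence of crossings, is secondary: the only danger is that the symmetric routes meet near the centroid or on a symmetry axis, which I would rule out after enlarging the scale $s$ or slightly offsetting the bends if necessary, after which the finitely many pairs of segments can be checked to be disjoint. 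This produces the desired drawing of $K_4$.
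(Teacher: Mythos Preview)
The paper's proof is simply ``See Figure~\ref{fig:k4}'': an explicit drawing is exhibited and the reader verifies the properties by inspection. Your proposal, by contrast, outlines a symmetry-based construction but never actually carries it out, and it contains a concrete error.

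The error is your parenthetical claim that $120^\circ$ is ``the only bend angle available to non-collinear, non-antiparallel face diagonals.'' Two face diagonals can also meet at $90^\circ$ (e.g.\ the directions $(1,1,0)$ and $(1,-1,0)$) or at $60^\circ$ (e.g.\ $(1,1,0)$ and $(-1,0,1)$), so a bend between face-diagonal segments is \emph{not} automatically at least $120^\circ$. Your routing problem therefore has a genuine angle constraint at each bend that you must enforce, not one that comes for free.

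More seriously, you explicitly defer the one step that constitutes the proof: you say the single representative route ``is most naturally settled by writing out explicit coordinates,'' but you do not write them out, and you do not verify that the symmetry group then assigns the three triad directions at each vertex consistently to the three incident edges. Until those coordinates are on the page, you have a plan, not a proof. The crossing check is similarly unfinished: enlarging the scale $s$ cannot help, since a symmetric construction scales with it and any intersection persists, while ``slightly offsetting the bends'' destroys the very symmetry you invoked to avoid routing all six edges. For an existence lemma of this kind the honest proof is the explicit drawing; your framework could produce one, but as written it stops just before the point where it would.
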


\begin{proof}
See Figure~\ref{fig:k4}.
\begin{figure}[t]
\centering
\includegraphics[width=\textwidth]{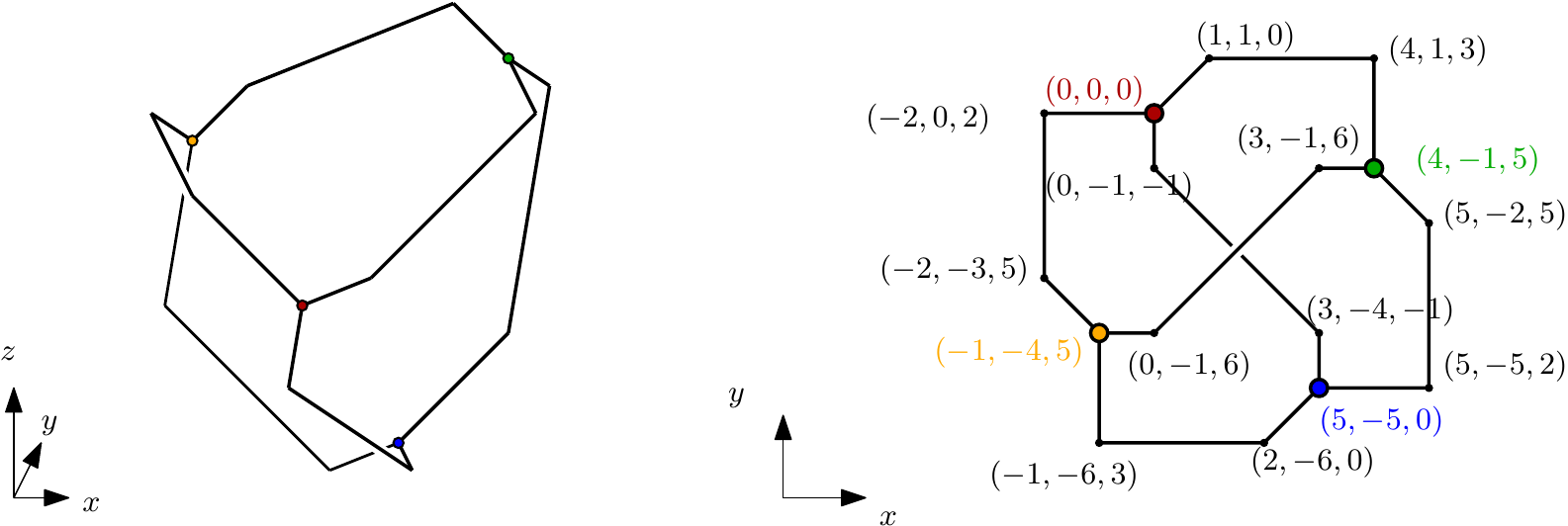}
\caption{A two-bend drawing of $K_4$ with $120^\circ$ angular resolution (left) and its two-dimensional projection (right).}
\label{fig:k4}
\end{figure}
\end{proof}

\begin{lemma}
\label{lem:single-cycle}
Let $G$ be a graph with maximum degree three, consisting of a cycle
$C$ of $n\ge 4$ vertices together with some number of degree-one
vertices that are adjacent to some of the vertices in $C$. Suppose
also that each degree-one vertex in $G$ is labeled with the number
$+1$ or $-1$. Then, there is a drawing of $G$ with the following
properties:
\begin{itemize}
\item All vertices and bends have angular resolution at least $120^\circ$.
\item All edges of $C$ have at most two bends.
\item All edges attaching the degree-one vertices to $C$ have no bends.
\item Every degree-one vertex has the same $x$ and $y$ coordinates as its (unique) neighbor, and its $z$ coordinate differs from its neighbor's $z$ coordinate by its label. Thus, all edges connecting degree-one vertices to $C$ are parallel to the $z$ axis, all positively labeled vertices are above (in the positive $z$ direction from) their neighbors, and all negatively labeled vertices are below their neighbors.
\item No three vertices of $C$ project to collinear points in the $(x,y)$-plane.
\end{itemize}
\end{lemma}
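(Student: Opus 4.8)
The plan is to exploit the rigidity that the $120^\circ$ bound forces at the pendant-bearing vertices of $C$, and to treat everything else as slack available for routing. First I would record the forced local picture. A vertex $v$ of $C$ that carries a pendant has degree three, and three unit vectors whose pairwise angles are all at least $120^\circ$ must be coplanar and meet at exactly $120^\circ$ (the maximum of the minimum pairwise angle among three points on a sphere is $120^\circ$, attained only by the equilateral great-circle configuration). Since the pendant edge is required to be parallel to the $z$-axis, this common plane is vertical, and the two cycle edges leave $v$ along directions making $120^\circ$ with $\pm\hat z$: a short computation gives them $z$-component $-\tfrac12$ (for a $+1$ pendant) or $+\tfrac12$ (for a $-1$ pendant), with horizontal parts that are antiparallel and of length $\tfrac{\sqrt3}{2}$. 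So at a $+1$ pendant the two cycle edges descend at $30^\circ$ below horizontal in opposite horizontal directions (a ``tent''), at a $-1$ pendant they ascend symmetrically, and the only remaining freedom is the horizontal orientation $\alpha_v$ of the tent. At a degree-two vertex of $C$ the two cycle edges only need to meet at an angle $\ge 120^\circ$, leaving a whole cone of admissible directions.

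Next I would reduce each edge to a routing problem. An edge is a polyline of at most three segments, and the interior angle at every bend must be $\ge 120^\circ$, i.e.\ consecutive segments deflect by at most $60^\circ$. Writing $d_i,d_{i+1}$ for the (possibly forced) outgoing directions of edge $v_iv_{i+1}$ at its two ends, the three segment directions are $d_i$, a free middle direction $m$, and $-d_{i+1}$; a middle direction within $60^\circ$ of both exists exactly when $\angle(d_i,-d_{i+1})\le 120^\circ$, equivalently $\angle(d_i,d_{i+1})\ge 60^\circ$. It then remains to place $v_i,v_{i+1}$ and pick positive segment lengths so that $v_{i+1}-v_i$ lies in the cone spanned by the three directions, which can always be met by choosing the lengths once the displacement lies in that cone.

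The construction then proceeds as follows. I would set the $z$-coordinates so that each $+1$ pendant vertex sits above the points its two edges head toward and each $-1$ pendant vertex sits below them, letting an edge between two equal-sign pendants simply dip (or rise) at its middle segment. I would place the horizontal projections of the vertices of $C$ in strictly convex position, which at once yields the last required property that no three project to collinear points, while leaving all tent orientations $\alpha_v$ and all segment lengths free (and keeping the horizontal projection of $C$ a simple closed curve, so no self-crossings are introduced even though this lemma does not demand it). The one binding instance of the angle condition $\angle(d_i,d_{i+1})\ge 60^\circ$ is an edge joining two pendant vertices of the same sign, where both outgoing directions share the same $z$-component; there a computation shows it is enough to make the horizontal angle between the chosen directions at least $\arccos\tfrac13\approx 70.5^\circ$, which the free orientations $\alpha_{v_i},\alpha_{v_{i+1}}$ and the two opposite ports of each tent always permit.

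The main obstacle I expect is global consistency: the segments must close into a cycle while every endpoint direction is simultaneously one of the rigidly prescribed tent directions. I plan to handle this by a degrees-of-freedom argument, since the tent orientations together with all segment lengths vastly outnumber the three scalar closure equations $\sum_i (v_{i+1}-v_i)=0$. The genuinely delicate cases, where both the per-edge condition $\angle(d_i,d_{i+1})\ge 60^\circ$ and the cone conditions must be satisfied at once, are short cycles (hence the hypothesis $n\ge 4$, with the triangle case $n=3$ excluded and deferred to Lemmas~\ref{lem:triangle} and~\ref{lem:k4}) and long runs of same-sign pendant vertices; these I would verify directly to confirm that orientations and lengths can be chosen to meet every constraint simultaneously.
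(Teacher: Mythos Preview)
Your local analysis is correct and sharper than the paper's exposition: you rightly deduce that at a pendant-bearing vertex the two cycle-edge directions are forced into a vertical ``tent'' with antiparallel horizontal parts, and your condition $\angle(d_i,d_{i+1})\ge 60^\circ$ for the existence of an admissible middle direction (with the $\arccos\tfrac13$ computation for same-sign neighbours) is right. But the global step is where the proposal stops short of a proof. A degrees-of-freedom count does not establish solvability of a system that mixes the closure equalities with \emph{inequalities} (every segment length positive, every displacement in the appropriate cone, the $60^\circ$ condition on every edge simultaneously); and the cases you label ``delicate'' and promise to verify directly---short cycles and long same-sign runs---are precisely where the coupling is tight. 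For instance, when every vertex carries a $+1$ pendant, the $xy$-projection of $C$ goes straight through every vertex (your antiparallel observation), so the full $2\pi$ of projected winding must come from the bend points while every 3D bend stays $\ge 120^\circ$ and every length stays positive; showing this can be arranged for all $n\ge 4$ is the heart of the lemma, not an afterthought.

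The paper takes a different, fully explicit route that sidesteps any existence argument. It fixes a convex polygon $P$ in the $xy$-plane whose sides alternate between axis-parallel and $45^\circ$-diagonal (an octagon for $4\le n\le 8$, longer for larger $n$), assigns one or two consecutive cycle vertices to each axis-parallel side, and lifts each such vertex to height $\pm\ell/(2k\sqrt3)$ according to its pendant's sign (with $\ell$ the side length and $k\in\{1,2\}$ vertices on that side), so that the tent at each vertex is aligned with its side. The diagonal sides of $P$ lie in $z=0$ and supply the two bends between vertices on consecutive axis-parallel sides; a direct computation gives those bend angles as $\arccos(-\sqrt{3/8})\approx 127.8^\circ$. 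The ``no three collinear'' clause is then met by perturbing the side lengths of $P$. Your framework, if the deferred verifications were carried out, would yield a more flexible family of drawings (and indeed the paper's polygon is one point in your parameter space), but as written it does not yet produce a single drawing.
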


\begin{figure}[t]
\centering\includegraphics[height=1.25in]{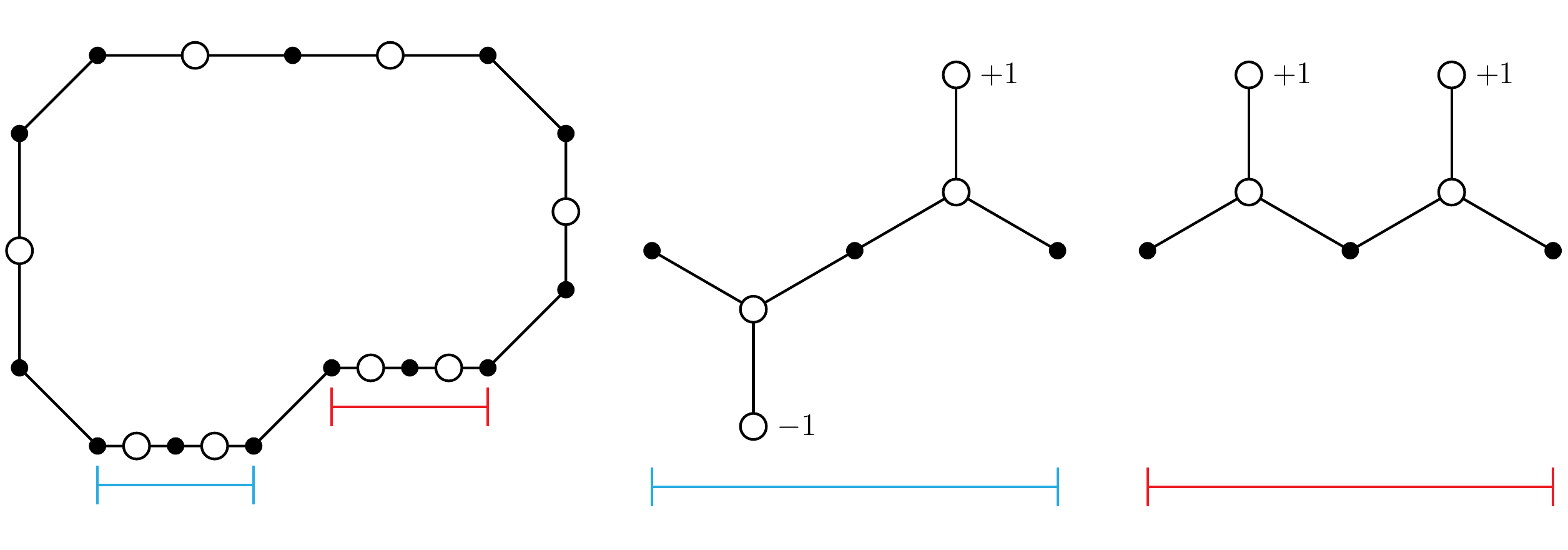}
\caption{The embedding of a cycle with degree-one neighboring vertices described by Lemma~\ref{lem:single-cycle}. Left: the $xy$-projection of the cycle; cycle vertices are indicated as large hollow circles and bends are indicated as small black disks. Right (at a larger scale): the $xz$-projection of the portions of the embedding corresponding to the two horizontal bottom sides of the $xy$-projected polygon.}
\label{fig:cycle120}
\end{figure}

\begin{proof}
As shown in Figure~\ref{fig:cycle120},
we draw $C$ in such a way that it projects onto a polygon $P$ in the $xy$-plane, with $135^\circ$ angles and with sides parallel to the coordinate axes and at $45^\circ$ angles to the axes. There are polygons of this type with a number of sides that can be any even number greater than seven; we choose the number of sides of $P$ so that at least one and at most two vertices of $C$ can be assigned to each axis-parallel side of the polygon.  (E.\,g., when $C$ has from four to eight vertices, $P$ can have eight sides, but when $C$ has more vertices $P$ must be more complex.)

We assign the vertices of $C$ consecutively to the axis-parallel sides
of $P$, in such a way that at least one vertex of $C$ and at most two
vertices are assigned to each axis-parallel side. If one vertex is assigned to a side, it is placed at the midpoint of that side, and if two vertices are assigned to a side of length $\ell$, then they are placed at distances of $\ell/4$ from one endpoint of the side, as measured in the $xy$ plane, with a bend at the midpoint of the side.

In three dimensions, the diagonal sides of $P$ are placed in the plane
$z=0$.
For any axis-parallel side of $P$ of length $\ell$ containing $k$ vertices of $C$,
we place the vertices with no degree-one neighbor or with a positively labeled
neighbor at elevation $z=\ell/(2k\sqrt{3})$, and the vertices with a
negatively labeled neighbor at elevation $z=-\ell/(2k\sqrt{3})$,
so that the portion of $C$ that projects onto a single side of $P$
forms a polygonal curve with angles of exactly $120^\circ$.
The degree-one
neighbors of the vertices in $C$ are then placed above or below them
according to their signs.

With this embedding, each vertex of $C$ gets angular resolution
exactly $120^\circ$. Any two consecutive vertices of $C$ that are
assigned to the same side of $P$ are separated either by zero bends
(if their neighbors have opposite signs) or a single bend (if their
neighbors have the same signs). Two consecutive vertices of $C$ that
belong to two different sides of $P$ are separated by two bends at two
of the corners of $P$; these bends have angles of
$\arccos(-\sqrt{3/8})\approx 127.8^\circ$. By adjusting the lengths of
the sides of $P$ appropriately, we may ensure that no three vertices
of $C$ project to collinear points in the $xy$-plane.
\end{proof}

The main idea of our drawing algorithm is to use Lemma~\ref{lem:single-cycle}, and some simpler cases for individual vertices, to repeatedly extend partial drawings of the given graph $G$ until the entire graph is drawn. We define a \emph{vertically extensible partial drawing} of a set $S$ of vertices of $G$ to be a drawing of the subgraph $G[S]$ induced in $G$ by $S$, with the following properties:
\begin{itemize}
\item The drawing of $G[S]$ has angular resolution $120^\circ$ or greater and has at most two bends per edge.
\item Each vertex in $S$ has at most one neighbor in $G\setminus S$.
\item If a vertex $v$ in $S$ has a neighbor $w$ in $G\setminus S$, then $w$ could be placed anywhere along a ray in the positive $z$-direction from $v$, producing a drawing of $G[S\cup\{w\}]$ that remains non-crossing, continues to have angular resolution $120^\circ$ or greater, and has no bends on edge $vw$. We call the ray from $v$ the \emph{extension ray} for edge $vw$.
\item No three extension rays are coplanar.
\end{itemize}

\begin{figure}[t]
\centering\includegraphics[height=1.25in]{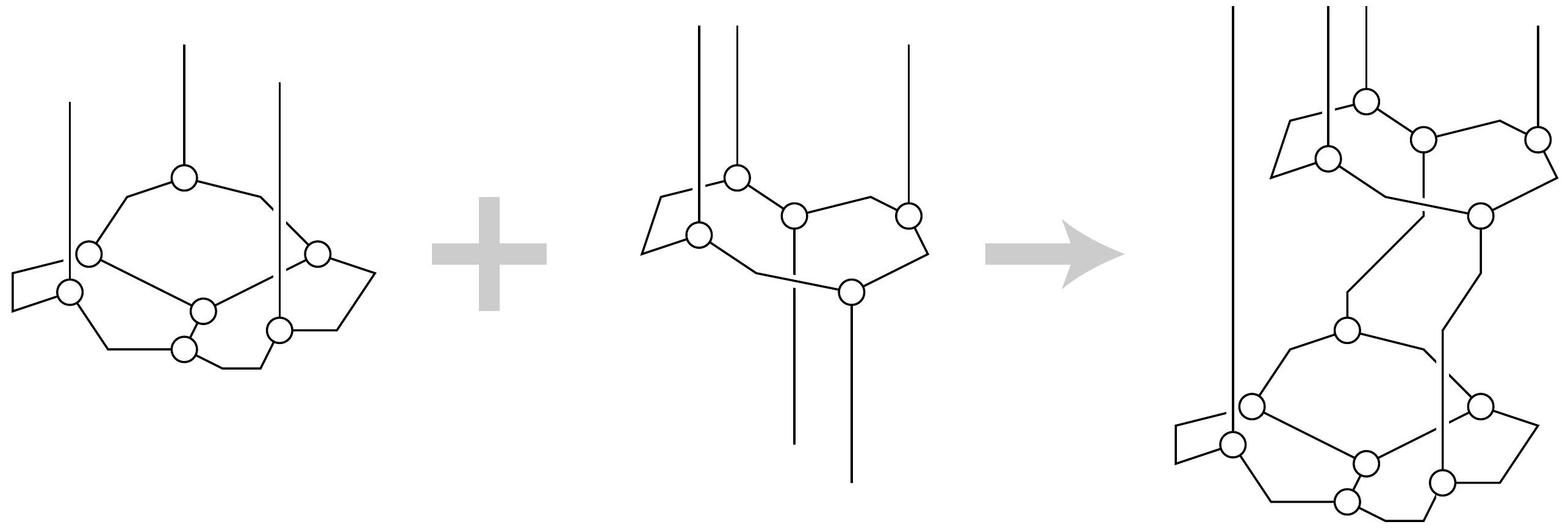}
\caption{Extending a vertically extensible drawing by adding a cycle.}
\label{fig:cyclex}
\end{figure}

For instance, if $C$ is a chordless cycle of length four or greater in $G$, then by Lemma~\ref{lem:single-cycle} there exists a vertically extensible partial drawing of $C$. More, the same lemma may be used to add another cycle to an existing vertically extensible partial drawing (Figure~\ref{fig:cyclex}):

\begin{lemma}
\label{lem:add-cycle}
For any vertically extensible drawing of a set $S$ of vertices in a graph $G$ of maximum degree three, and any chordless cycle $C$ of length four or more in $G\setminus S$, there exists a vertically extensible drawing of $S\cup C$.
\end{lemma}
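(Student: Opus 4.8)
The plan is to realize $C$ by a single application of Lemma~\ref{lem:single-cycle}, placed far above the existing drawing of $S$ in the $z$-direction, and then to hook its pendant edges either downward onto the extension rays of $S$ or upward into fresh extension rays. Since $C$ is chordless and $G$ has maximum degree three, the two cycle edges at each vertex of $C$ account for all but at most one of its incident edges; the remaining (at most one) edge at a vertex $w$ of $C$ is either an edge $vw$ to some $v\in S$ or an edge to a vertex of $G\setminus(S\cup C)$. I would treat each such edge as a degree-one pendant of $C$ and feed the whole configuration to Lemma~\ref{lem:single-cycle}, assigning the label $-1$ to every pendant whose far endpoint lies in $S$ (so that it points downward, toward $S$) and the label $+1$ to every pendant whose far endpoint is a future vertex (so that it points upward). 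After translating this drawing so that it sits entirely above $S$, the upward pendants are exactly the candidate extension rays for the new drawing, and by Lemma~\ref{lem:single-cycle} each cycle vertex has angular resolution exactly $120^\circ$ and each cycle edge has at most two bends.

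The crux is connecting each edge $vw$ with $v\in S$ and $w\in C$. Near $v$ the edge must leave along $v$'s extension ray $R_v$ (a vertical ray in the $+z$ direction), because the extensibility hypothesis certifies $120^\circ$ at $v$ precisely for this direction and certifies that the initial segment along $R_v$ is free of crossings with the drawing of $S$; near $w$ it must leave straight down, because that is the pendant direction for which Lemma~\ref{lem:single-cycle} certifies $120^\circ$ at $w$. I would therefore route $vw$ as $v\to A\to B\to w$, where $A=(x_v,y_v,z_A)$ lies on $R_v$, where $B=(x_w,y_w,z_B)$ lies on the downward vertical line through $w$, and where $z_v<z_A<z_B<z_w$. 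The first and last segments are then vertical as required, and the two bends occur only at $A$ and $B$. Because $C$ is placed far above $S$, the $z$-gap $z_B-z_A$ can be made arbitrarily large relative to the fixed horizontal offset between $(x_v,y_v)$ and $(x_w,y_w)$, which makes the connecting segment $AB$ as close to vertical as desired; consequently both bend angles can be forced above $120^\circ$ (indeed arbitrarily close to $180^\circ$). This yields a two-bend realization of $vw$ that preserves the angular-resolution constraints at both endpoints.

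It remains to guarantee non-crossing and to re-establish the four extensibility conditions, and this genericity bookkeeping is where I expect the real difficulty to lie. Placing $C$ high enough that its entire drawing together with all connecting segments lies above $S$ confines the only possible crossings to three types: among the connecting segments, between a connecting segment and an extension ray, and between $C$'s drawing and an old dangling extension ray of $S$. Since all extension rays are vertical, ``no three coplanar'' is equivalent to ``no three base points collinear in the $(x,y)$-projection''; the old rays already satisfy this, Lemma~\ref{lem:single-cycle} supplies no-three-collinear among the vertices of $C$ (hence among the new rays), and a generic horizontal translation and scaling of $C$ removes the finitely many remaining collinearities and also makes the projection of $C$ avoid the finitely many projected base points of the old dangling rays. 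The same genericity, aided by the large vertical separation that keeps the connecting segments nearly vertical, lets those segments avoid one another and avoid every extension ray, each being a codimension-one condition on finitely many segments and lines. Finally I would verify the structural conditions: each vertex of $S\cup C$ still has at most one neighbor outside $S\cup C$ (cycle vertices spend two edges inside $C$, and an edge from $S$ to $C$ only lowers the outside-degree of its $S$-endpoint); every old dangling ray remains valid because $C$ now lies off it; and every new upward pendant is a valid extension ray because it escapes upward past all of $C$ into empty space. Assembling these facts produces a vertically extensible drawing of $S\cup C$.
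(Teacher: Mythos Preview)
Your proof is correct and follows essentially the same approach as the paper: draw $C$ via Lemma~\ref{lem:single-cycle} with $\pm 1$ labels according to whether each pendant's far endpoint lies in $S$ or outside, translate $C$ in the $xy$-plane to destroy all collinearities among extension-ray base points and $C$-vertices (and to keep the old dangling rays off $C$'s projection), then connect each $v\in S$ to its neighbor $w\in C$ by a two-bend path whose first and last segments are vertical. The only difference is cosmetic: the paper routes the middle segment in the vertical plane through $v$'s ray and $w$ with two bends of exactly $120^\circ$, whereas you take the middle segment nearly vertical so the bends approach $180^\circ$; both choices satisfy the angular-resolution bound, and the paper's version makes the non-crossing check for the middle segment marginally cleaner (being confined to one vertical plane, it can meet another vertical extension ray only under a forbidden three-point collinearity).
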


\begin{proof}
For each vertex $v$ in $C$ that has a neighbor $w$ in $G$, replace $w$ with a degree-one vertex that has label $-1$ if $w\in S$ and $+1$ if $w\notin S$. Apply Lemma~\ref{lem:single-cycle} to find a drawing of $C$ that can be connected in the negative $z$-direction to the neighbors of $C$ in $S$, and in the positive $z$-direction for the remaining neighbors of $C$. Translate this drawing of $C$ in the $xy$-plane so that, among the extension rays of $S$ and the vertices of $C$, there are no three points and rays whose projections into the $xy$-plane are collinear and so that, when projected onto the $xy$-plane, the extension rays of $S$ (points in the $xy$-plane) are disjoint from the projection of the drawing of $C$.

For each extension ray of $S$ that connects a vertex $v$ of $S$ to a vertex $w$ in $C$, draw a two-bend path with $120^\circ$ bends in the plane containing the extension ray and $w$, such that the final segment of the path has the same $x$ and $y$ coordinates of $w$. By making the transverse section of this path be far enough away from $S$ in the positive $z$ direction, it will not intersect any other features of the existing drawing, and it cannot cross any of the other extension rays due to the requirement that no three of these rays be coplanar. If $C$ is translated in the positive $z$ direction farther than all of the bends in these paths, it can be connected to $S$ to form a vertically extensible drawing of $S\cup C$, as required.
\end{proof}

\begin{figure}[t]
\centering\includegraphics[scale=0.5]{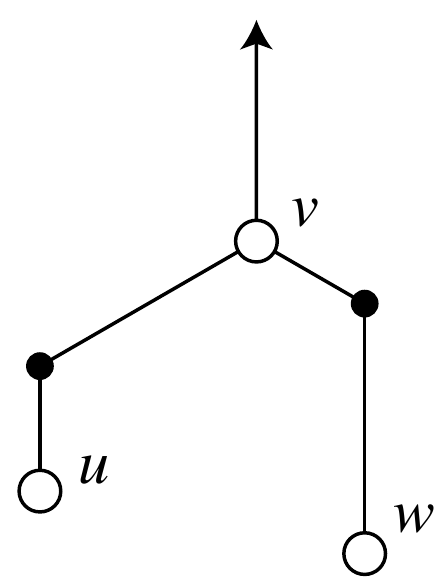}\qquad\includegraphics[scale=0.5]{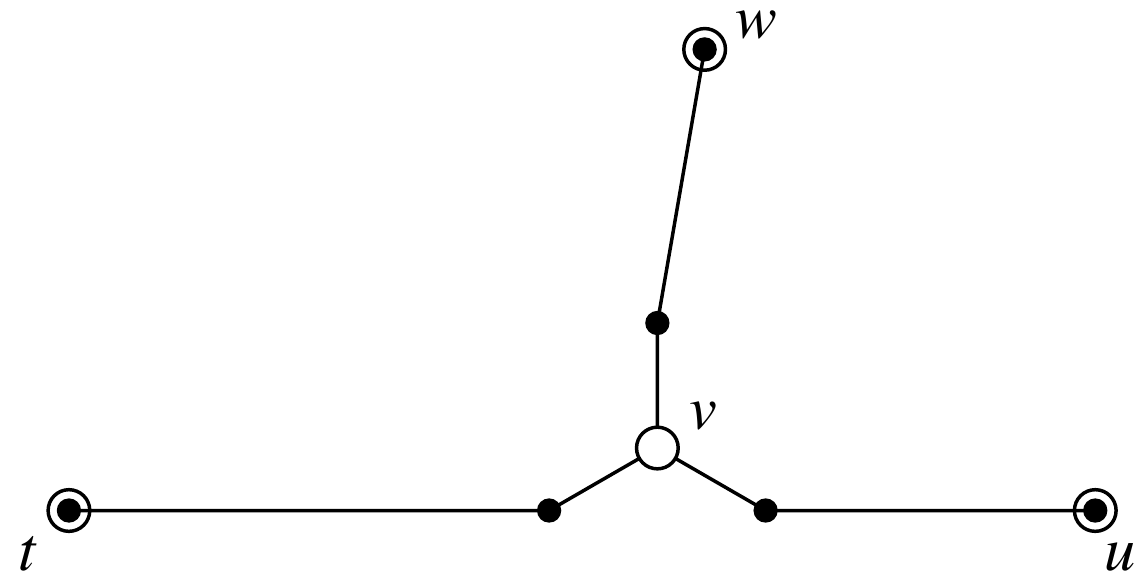}
\caption{Left: Adding a vertex $v$ with two neighbors $u$ and $w$ in $S$ and one neighbor in $G\setminus S$ to a vertically extensible drawing (shown in the plane of the extension rays of $u$ and $w$). Right: Adding a vertex $v$ with three neighbors $t$, $u$, and $w$ in $S$ (shown in the $xy$-plane). The three segments incident to $v$ are parallel to the $xy$ plane and the three remaining transverse segments form $120^\circ$ angles to the extension rays of $t$, $u$, and $w$. The bends where these transverse segments meet their extension rays are shown on top of the three points $t$, $u$, and $w$.}
\label{fig:extend1v}
\end{figure}

\begin{lemma}
\label{lem:add-vertex}
For any vertically extensible drawing of a set $S$ of vertices in a graph $G$ of maximum degree three, and any vertex $v$ in $G\setminus S$ with at most two neighbors in $S$ and at most one neighbor in $G\setminus S$, there exists a vertically extensible drawing of $S\cup \{v\}$.
\end{lemma}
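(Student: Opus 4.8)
The plan is to insert $v$ by a short case analysis on the number of its neighbors that already lie in $S$. Observe first that whenever $v$ is adjacent to a vertex $u\in S$, then $v$ is an outside neighbor of $u$; since every vertex of a vertically extensible drawing has at most one neighbor in $G\setminus S$, the unique extension ray of $u$ is exactly the slot reserved for the edge $uv$, and I will attach $v$ using that ray. If $v$ itself has a neighbor $x$ in $G\setminus S$, the new drawing must in turn give $v$ a clear vertical extension ray in the $+z$ direction. In the two easy cases this is immediate. If $v$ has no neighbor in $S$, I place $v$ generically far from the current drawing and, if needed, let its extension ray point straight up. If $v$ has a single neighbor $u\in S$, I place $v$ somewhere on the extension ray of $u$, so that $uv$ is one straight vertical segment with no bend; when $v$ needs its own ray I use the part of $u$'s ray above $v$, which is clear by extensibility, and the projection of $v$ coincides with that of $u$, so no new collinearity among ray origins is created.

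The interesting case is $v$ with two neighbors $u,w\in S$ (Figure~\ref{fig:extend1v}, left). Their extension rays are two distinct vertical lines and therefore span a vertical plane $\Pi$; let $L=\Pi\cap\{z=0\}$, a horizontal line carrying both projections $p_u,p_w$. I place $v$ high above the entire current drawing, at a point of $\Pi$ whose projection $p_v$ lies on $L$ strictly between $p_u$ and $p_w$, and I route every new edge inside $\Pi$. The edges at $v$ are drawn at pairwise angles of $120^\circ$: the edge toward $u$ leaves $v$ at $30^\circ$ below the horizontal heading toward $p_u$, the edge toward $w$ leaves symmetrically toward $p_w$, and (only when $v$ has an outside neighbor) the third edge leaves $v$ straight up and becomes $v$'s extension ray. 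Each downward edge, after a single $120^\circ$ bend placed above $u$ (respectively $w$), meets the vertical extension ray of its endpoint and descends along that ray down to the vertex.

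The step I expect to be the crux is this last, degree-three sub-case, because optimal resolution leaves no freedom: three rays with pairwise angles at least $120^\circ$ satisfy $|a+b+c|^2=3+2\sum a\cdot b\le 0$, forcing $a+b+c=0$ and hence a coplanar configuration with all angles exactly $120^\circ$. Thus the two edges toward $u$ and $w$ already pin the plane of the configuration at $v$ down to the vertical plane $\Pi$ spanned by the two given extension rays. The one fortunate fact that makes the construction possible is that this forced plane $\Pi$, being vertical, also contains the $+z$ direction, so $v$'s own upward extension ray fits into the same rigid $120^\circ$ figure with no conflict; if the geometry forced a non-vertical plane here, the method would break.

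It remains to verify the extensibility properties. Each downward edge leaves $u$ (respectively $w$) along its extension ray, so the $120^\circ$ resolution at that endpoint is preserved by the extensibility of the old drawing; the bend above it and the angle at $v$ are both exactly $120^\circ$, and each new edge carries at most one bend. The descending segments lie along the old extension rays, which are clear to the bottom, while the remaining high part of the construction sits above every feature of the old drawing once $v$ is placed high enough, so the drawing stays non-crossing. After the insertion, $u$ and $w$ have no outside neighbor left and $v$ has at most one, so each drawn vertex still has at most one neighbor outside $S\cup\{v\}$. Finally, the insertion replaces the two ray origins $p_u,p_w$ by the single origin $p_v$; only finitely many positions of $p_v$ along $L$ are collinear with two surviving origins, so I slide $v$ along $L$ to avoid them and restore the condition that no three extension rays are coplanar.
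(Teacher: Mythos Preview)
Your proof is correct and follows essentially the same construction as the paper: the same three-way case split on the number of neighbors of $v$ in $S$, the same placement of $v$ in the vertical plane $\Pi$ spanned by the two extension rays, the same one-bend transverse segments meeting those rays at $120^\circ$, and the same sliding of $p_v$ along the segment $p_up_w$ to avoid the finitely many bad collinearities. Your rigidity observation---that three unit vectors with pairwise angles at least $120^\circ$ must sum to zero and hence be coplanar, so the configuration at $v$ is forced into $\Pi$ and it is fortunate that $\Pi$, being vertical, already contains the $+z$ direction---is a nice justification that the paper leaves implicit.
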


\begin{proof}
If $v$ has no neighbors in $S$, then $v$ may be placed anywhere on any $z$-parallel line that does not pass through a feature of the existing drawing and is not coplanar with any two existing extension rays. If $v$ has a single neighbor $w$ in $S$, then $v$ may be placed anywhere on the extension ray of $wv$.

In the remaining case, $v$ connects to two extension rays of~$S$.
Within the plane of these two rays, we may connect $v$ to these two rays by transverse segments at $120^\circ$ angles to the rays. By placing $v$ far enough in the positive $z$ direction, these transverse segments can be made to avoid any existing features of the drawing. The extension ray from $v$ can lie on any line parallel to and between the lines of the two incoming extension rays; only finitely many of these lines lead to coplanarities with other extension rays, so it is always possible to place $v$ avoiding any such coplanarity. As shown in Figure~\ref{fig:extend1v}(left), this construction produces one bend on each edge into~$v$.
\end{proof}

\begin{lemma}
\label{lem:triskelion}
If we are given a vertically extensible drawing of a set $S$ of vertices in a graph $G$ of maximum degree three, and a vertex $v$ in $G\setminus S$ that has three neighbors $t$, $u$, and $w$ in $S$, then there exists a vertically extensible drawing of $S\cup \{v\}$.
\end{lemma}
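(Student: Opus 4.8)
The plan is to place $v$ high above the existing drawing and route its three edges downward to the extension rays of $t$, $u$, and $w$, mimicking the ``triskelion'' configuration of Figure~\ref{fig:extend1v} (right). Since $v$ has degree three and all three neighbors already lie in $S$, the new vertex needs no extension ray of its own, so it suffices to produce a non-crossing drawing of $G[S\cup\{v\}]$ with angular resolution at least $120^\circ$ and at most two bends per edge, and then to verify the bookkeeping conditions for vertical extensibility. The three segments incident to $v$ will be horizontal (parallel to the $xy$-plane); because three coplanar rays with pairwise angles at least $120^\circ$ must actually meet at exactly $120^\circ$, these segments are forced to form a $120^\circ$ triskelion, whose orientation and whose center $v'=(x_v,y_v)$ we are still free to choose. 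Each edge then consists of this horizontal segment, a transverse segment, and a final segment running down the neighbor's (vertical) extension ray, for two bends in total.

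First I would reduce this to a purely planar placement question. Projecting to the $xy$-plane, the neighbors project to three points $t',u',w'$, which are \emph{not} collinear, since collinearity would make the three vertical extension rays coplanar, contradicting the definition of a vertically extensible drawing. A short computation, setting up a single edge in the vertical plane through its two bends, shows that an edge admits a two-bend routing with both bend angles at least $120^\circ$ exactly when the horizontal direction assigned to it at $v$ makes an angle at most $\arctan\sqrt2\approx 54.7^\circ$ with the horizontal displacement from $v'$ to that neighbor; the transverse segment then absorbs both the in-plane turn and the descent. Hence it is enough to choose $v'$ and the triskelion orientation so that each of the three $120^\circ$-separated horizontal directions points within $54.7^\circ$ of the corresponding neighbor.

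The main obstacle is exactly this placement problem, and I expect to resolve it by a dichotomy on the triangle $t'u'w'$. If all three of its angles are below $120^\circ$, I put $v'$ at its Fermat--Torricelli point, which sees the three vertices in directions that are pairwise exactly $120^\circ$ apart; these match a suitably rotated triskelion with zero angular error, comfortably within the $54.7^\circ$ budget. If instead some angle $A\ge 120^\circ$ occurs at a vertex, I place $v'$ just inside that vertex along its interior bisector; the angular gaps between consecutive directions then tend to $A$, $180^\circ-A/2$, and $180^\circ-A/2$, and the best triskelion alignment leaves a per-ray error of only $A/2-60^\circ<30^\circ$, again within budget. (Placing $v'$ far from the triangle must be avoided, since from far away two of the gaps shrink toward zero and the tolerance fails; this is precisely why an interior or near-vertex placement is needed.) As every non-degenerate triangle falls into one of these two cases, a valid placement always exists.

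Finally I would close with the genericity and bookkeeping arguments already used in Lemmas~\ref{lem:add-cycle} and~\ref{lem:add-vertex}. Taking the height of $v$ large enough lifts all the new horizontal and transverse segments above the bounded existing drawing, so they cannot meet it; a generic choice of $v'$, avoiding the finitely many positions that would cause collinear projections or coplanar rays, keeps the three new edge-paths disjoint from one another and from the other extension rays. By construction the drawing then has angular resolution at least $120^\circ$ and at most two bends per edge, every vertex still has at most one neighbor outside $S\cup\{v\}$ (the vertex $v$ has none, while $t$, $u$, and $w$ have just given up their only outside neighbor), and the remaining extension rays are untouched and hence still pairwise non-coplanar. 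Thus $S\cup\{v\}$ admits a vertically extensible drawing, as required.
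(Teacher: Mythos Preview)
Your argument is correct and shares the paper's overall skeleton: place $v$ high above $S$, emit three horizontal rays at $120^\circ$, descend along transverse segments meeting the vertical extension rays at exactly $120^\circ$, and observe that the remaining bend stays at or above $120^\circ$ precisely when the planar deviation between a triskelion arm and the direction to its neighbor is at most $\arccos(1/\sqrt3)\approx 54.7^\circ$. The genuine difference is in how you pick the planar center $v'$ and the triskelion orientation. The paper drops $v'$ onto the foot of the altitude to the longest side $t'u'$ and rotates the triskelion $30^\circ$ off that side, obtaining deviations $(30^\circ,30^\circ,\approx 0^\circ)$ uniformly for every triangle; you instead split cases, taking the Fermat--Torricelli point when all angles are below $120^\circ$ (deviation $0$ everywhere) and a point on the interior bisector near the large-angle vertex otherwise (deviations below $30^\circ$). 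Your placement yields strictly larger bend angles in the generic case and makes the $54.7^\circ$ budget visibly slack, at the cost of a dichotomy and the need to perturb off the rigidly determined Fermat point when avoiding the finitely many bad collinearities; the paper's altitude-foot construction is case-free and already lives on a one-parameter family along $t'u'$, so the perturbation step is built in. Either route closes the lemma.
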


\begin{proof}
  Suppose that $tu$ is the longest edge of the triangle formed by the
  projections of $t$, $u$, and $w$ into the $xy$ plane. Then, as a
  first approximation to the position of $v$ in the $xy$-plane, let
  the (two-dimensional) point $v'$ be placed on edge $tu$ of this
  triangle, at the point where $v'w$ is perpendicular to $tu$. We
  adjust this position along edge $tu$, keeping the angle between
  $v'w$ and $tu$ close to $90^\circ$ in order to ensure that line
  segment $v'w$ does not pass through
  the two-dimensional projection of any extension ray. Then, we
  replace $v'$ by three short line segments at $120^\circ$ angles to
  each other meeting the three line segments $v't$, $v'u$, and $v'w$
  at angles of $150^\circ$, $150^\circ$, and close to $180^\circ$. Let
  $v$ be the point where these three short line segments meet.

  This configuration can be lifted into three-dimensional space by
  placing $v$ and the three edges that attach to it in a plane
  perpendicular to the $z$ axis, and by replacing the remaining
  portions of line segments $v't$, $v'u$, and $v'w$ by transverse
  segments that make $120^\circ$ angles with the extension rays of
  $t$, $u$, and $w$. There are two bends per edge: one at the point
  where the extension ray of $t$, $u$, or $w$ meets a transverse
  segment, and one where a transverse segment meets one of the
  horizontal segments incident to $v$.

  The angles at the bends on the extension rays of $t$, $u$, and $w$
  are all exactly $120^\circ$, and the angles at the other bends on
  the paths connecting $t$ and $u$ to $w$ are $\arccos(3/4)\approx
  138.6^\circ$. As long as segment $v'w$ stays within $54^\circ$ of
  perpendicular to $tu$ in the $xy$-plane, the angle at the final
  remaining bend will be at least $120^\circ$.
\end{proof}

The construction of Lemma~\ref{lem:triskelion} is illustrated in Figure~\ref{fig:extend1v}(right).

\begin{theorem}
\label{thm:deg3bend2}
Any graph $G$ of degree three has a drawing with $120^\circ$ angular resolution and at most two bends per edge.
\end{theorem}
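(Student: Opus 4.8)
The plan is to prove the theorem by induction on the number of vertices, first eliminating all triangles and then drawing the resulting triangle-free graph by a greedy sweep that maintains the invariant of a \emph{vertically extensible partial drawing}.

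For the triangle elimination I would repeatedly invoke Lemma~\ref{lem:triangle}. As long as $G$ contains a triangle and $G\neq K_4$, that lemma replaces $G$ by a graph $G'$ with strictly fewer vertices---contracting either one triangle or a pair of triangles sharing an edge---and guarantees that a drawing of $G'$ lifts to one of $G$. One checks that each contraction keeps the maximum degree at most three, so the reduction is well defined, and that each application strictly decreases the vertex count, so the process terminates either at a triangle-free graph or at $K_4$, the single case the contraction cannot handle, which is drawn directly by Lemma~\ref{lem:k4}. Composing the lifts of Lemma~\ref{lem:triangle} back up the chain of reductions then yields a drawing of the original $G$, so it suffices to treat the triangle-free case.

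For a triangle-free subcubic graph I would build the drawing incrementally, keeping a set $S$ of already-drawn vertices carrying a vertically extensible partial drawing (starting from $S=\emptyset$, which is vacuously extensible) and letting $R=V(G)\setminus S$ be the remainder. At each step I inspect the induced subgraph $G[R]$ and make progress in one of two ways. If $G[R]$ has a vertex $v$ of degree at most one, then $v$ has at most one neighbor in $R$ and hence at least $\deg_G(v)-1$ neighbors already in $S$; I add $v$ by Lemma~\ref{lem:add-vertex} when at most two of its neighbors lie in $S$ and by Lemma~\ref{lem:triskelion} when all three do. Otherwise every vertex of $G[R]$ has degree at least two, so $G[R]$ contains a cycle; since $G$ is triangle-free a shortest such cycle is chordless and has length at least four, and I add this entire cycle by Lemma~\ref{lem:add-cycle}. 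Each step moves at least one vertex from $R$ into $S$ while preserving vertical extensibility, so after finitely many steps $S=V(G)$, and in the final drawing no vertex has a neighbor outside $S$, so the extension-ray conditions become vacuous and what remains is exactly a drawing of $G$ with angular resolution at least $120^\circ$ and at most two bends per edge.

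The heart of the argument, and the main point to verify, is that this sweep \emph{never gets stuck} while preserving the invariant. Progress is guaranteed by the dichotomy above: either $G[R]$ has a vertex of degree at most one, or its minimum degree is at least two and it must contain a chordless cycle of length at least four. Preservation of the invariant is precisely what forces the two different moves. Adding a single vertex is legitimate only when it leaves at most one neighbor outside $S$, which is exactly the degree-at-most-one condition in $G[R]$; when no such vertex exists the surplus cyclic structure cannot be absorbed one vertex at a time, and Lemma~\ref{lem:add-cycle} applies because each cycle vertex already has two of its (at most three) edges inside the cycle, leaving at most one edge to route either downward into $S$ or upward as a fresh extension ray. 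Confirming that adding a cycle never pushes any $S$-vertex above one outside neighbor, and that the chosen cycle genuinely satisfies the chordlessness and length hypotheses of Lemmas~\ref{lem:single-cycle} and~\ref{lem:add-cycle}, is where the bookkeeping must be done with care; the secondary obstacle is verifying that the triangle reductions terminate cleanly with $K_4$ as the only irreducible obstruction.
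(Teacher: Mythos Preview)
Your proposal is correct and follows essentially the same approach as the paper: eliminate triangles via Lemma~\ref{lem:triangle} down to either $K_4$ (handled by Lemma~\ref{lem:k4}) or a triangle-free graph, then grow a vertically extensible partial drawing from $S=\emptyset$ by the same dichotomy---add a vertex of $G'\setminus S$ with at most one neighbor outside $S$ via Lemma~\ref{lem:add-vertex} or~\ref{lem:triskelion}, and otherwise add a shortest (hence chordless, length $\ge 4$) cycle of $G'\setminus S$ via Lemma~\ref{lem:add-cycle}. The bookkeeping you flag at the end is immediate, since enlarging $S$ can only decrease the number of outside neighbors of any vertex already in $S$.
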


\begin{proof}
While $G$ contains a triangle, apply Lemma~\ref{lem:triangle} to simplify it, resulting in either $K_4$ or a triangle-free graph $G'$. If this simplification process leads to $K_4$,
draw it according to Lemma~\ref{lem:k4}. Otherwise, starting from
$S=\emptyset$, we repeatedly grow a vertically extensible drawing of a
subset $S$ of $G'$ until all of $G'$ has been drawn. If $G'\setminus
S$ contains a vertex with at most one neighbor in $G'\setminus S$, then either Lemma~\ref{lem:add-vertex} or Lemma~\ref{lem:triskelion} applies and we can add this vertex to the vertically extensible drawing. Otherwise, all vertices in $G'\setminus S$ have two or more neighbors in $G'\setminus S$, so $G'\setminus S$ contains a cycle. Let $C$ be the shortest cycle in $G'\setminus S$; it has length at least four (because we eliminated all triangles) and no chords (because a chord would lead to a shorter cycle) so we may apply Lemma~\ref{lem:add-cycle} to incorporate it into the vertically extensible drawing. Once we have included all vertices in the vertically extensible drawing, we have drawn all of $G'$, and we may reverse the transformations performed according to Lemma~\ref{lem:triangle} to produce a drawing of $G$.
\end{proof}

\ifArxiv

\section {Three-bend drawings of degree-three graphs on a grid}

In this section we provide an algorithm for embedding a degree-three
graph on a grid, using a similar approach to
Theorem~\ref{thm:deg3bend2} but with up to three instead of two bends
per edge. The grid will consist of the face diagonals of the cubes in
a regular grid of
cubes.
First of all, we will make a change of coordinates that allows us an
easier description. Define the $xy$-plane to be the plane spanned by
the edges $e_X = (0,1,1)$ and $e_Y = (1,1,0)$ and the $yz$-plane to be
the plane spanned by vectors $e_Y$ and $e_Z = (1,0,1)$. See
Figure~\ref{fig:skew}(a).

We will draw the different parts of the drawing in either a \emph
{horizontal} plane (parallel to the $xy$-plane) or in a \emph
{vertical} plane (parallel to the $yz$-plane).  The edges we use in
the $xy$-plane are parallel to an edge in the set $E_{XY} = \{e_X,
e_Y, e_X-e_Y\}$ and they all form angles of $120^\circ$.  Similarly,
in the $yz$-plane, all edges are parallel to an edge in $E_{YZ}=\{e_Y,
e_Z,e_Y-e_Z\}$. We will only use integer edge lengths.


\begin{figure}[h]
\centering\includegraphics[scale=0.8,page=2]{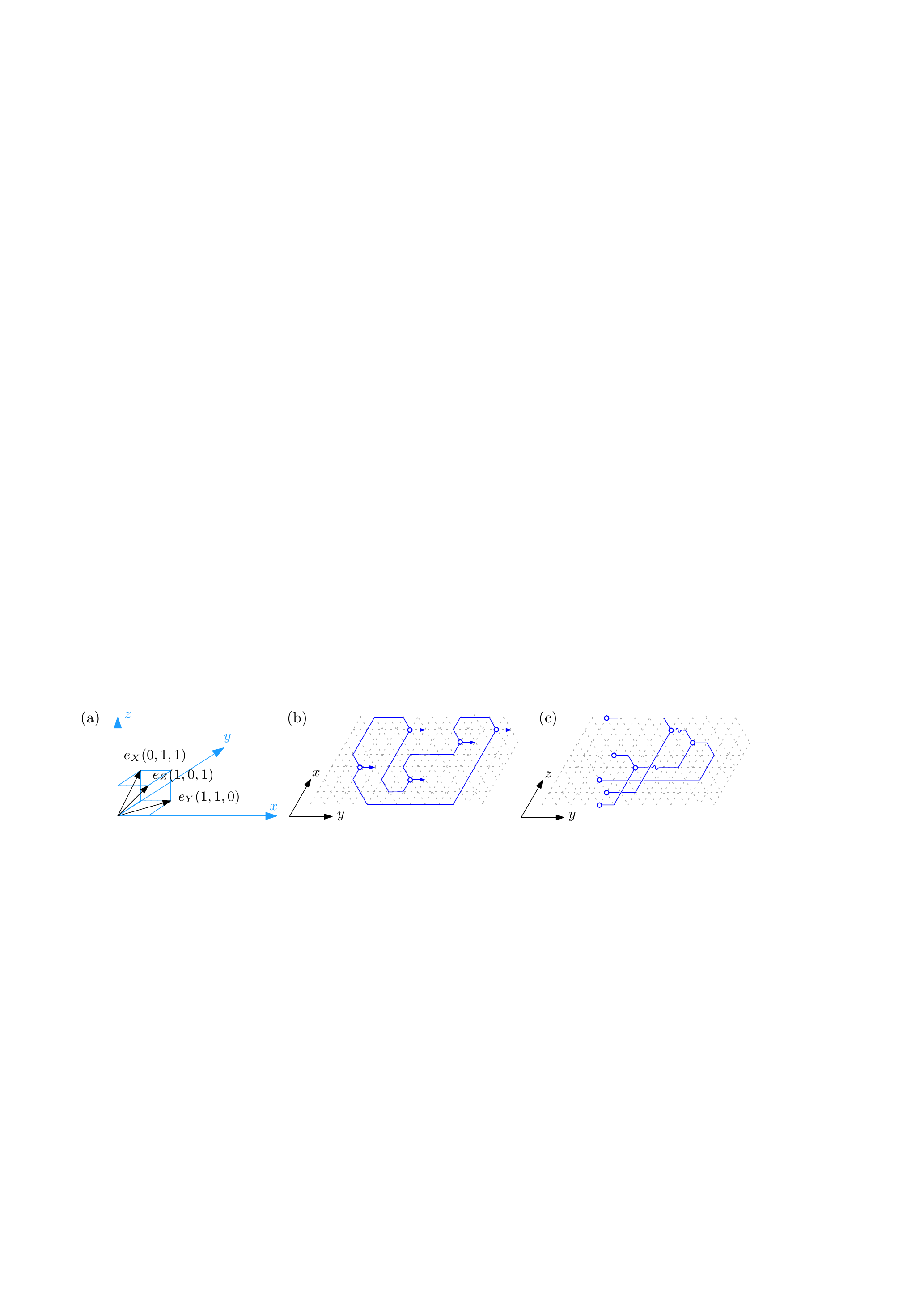}
\caption{(a) the three base vectors; (b) a cycle in a horizontal
  plane; (c) the tree structure connecting the extension rays in a set
  of three neighboring vertical planes (indicated by different colors).}
\label{fig:skew}
\end{figure}


The construction works similarly to the one described in Section~\ref
{sec:deg3-2bends}. In particular, we use exactly the same
decomposition of the graph into cycles and trees, and we still draw
every cycle in a different horizontal plane and extend the drawing in
$z$-direction with every new cycle. However, there are some important
differences. First of all, we no longer point the extension rays up
(in the $z$-direction), but to the right (the $y$-direction), within the plane in which we draw the cycle. As a result, the drawing
of a cycle is completely flat. Then, we draw the trees in vertical
planes through the extension rays of the respective
vertices. Figure~\ref {fig:skew}(b) and (c) shows the general idea.

\begin{lemma}
\label{lem:flat-single-cycle}
Let $G$ be a graph with maximum degree three, consisting of a cycle
$C$ of $n\ge 3$ vertices together with some number of degree-one
vertices that are adjacent to some of the vertices in $C$. Let $x_1,
\ldots, x_n$ be a set of distinct even integers bounded by $O(n)$. Then, there is a drawing
of $G$ in the $xy$-plane with the following properties:
\begin{itemize}
\item All vertices and bends have angular resolution $120^\circ$.
\item All edges of $C$ have at most three bends.
\item All edges of $G$ are parallel to an edge in $E_{XY}$.
\item For every degree-one vertex $v=(x_v,y_v,z_v)$ and its neighbor
  $u=(x_u,y_u,z_u)$ we have $x_v=x_u$, $y_v = y_u+1$, and $z_v=z_u$.
\item The drawing fits into a grid of size $O(n) \times O(n^2)$.
\item The $x$-coordinate of a vertex $v_i$ of $C$ is $x_i$, for all $1\leq i \leq n$.
\end{itemize}
\end{lemma}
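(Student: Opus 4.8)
The plan is to replay the strategy of Lemma~\ref{lem:single-cycle} in the flat, grid-aligned setting, exploiting that in the new coordinates the three allowed edge directions $e_X=(1,0,0)$, $e_Y=(0,1,0)$ and $e_X-e_Y=(1,-1,0)$, together with their negatives, point to the vertices of a regular hexagon in the underlying Euclidean metric, so that consecutive ones are $60^\circ$ apart. Consequently any polyline that uses only these directions and turns at each bend to an \emph{adjacent} hexagon direction (a $60^\circ$ turn) automatically has interior angle exactly $120^\circ$ there; this reduces the whole problem to a routing question on a triangular grid. I would first fix a rigid \emph{port pattern} at every cycle vertex: the edge to its degree-one neighbor leaves in direction $+e_Y$, which is exactly the offset $x_v=x_u$, $y_v=y_u+1$, $z_v=z_u$ demanded by the statement and which contributes no bend, and the two incident cycle edges leave in the two remaining directions $-e_X$ and $e_X-e_Y$. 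A short calculation confirms that $\{+e_Y,-e_X,e_X-e_Y\}$ sum to zero and are pairwise at $120^\circ$, so every vertex has angular resolution exactly $120^\circ$ no matter how the cycle edges are later routed.

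Next I would place the vertices. Each $v_i$ is pinned to its prescribed even $x$-coordinate $x_i$, and I would assign the $v_i$ distinct, well-separated $y$-levels in cycle order, so that the $n$ vertices occupy $n$ horizontal tracks. The horizontal span is then $O(n)$ by the hypothesis on the $x_i$. The vertical span will be $O(n^2)$: because both cycle-port directions $-e_X$ and $e_X-e_Y$ point left or down, never up, any edge must climb against its ports, and reconciling the fixed ports with an arbitrary horizontal displacement $x_{i+1}-x_i=O(n)$ (which on the triangular grid is coupled to a vertical change through the diagonal direction) forces a detour of vertical extent $O(n)$ per edge; nesting the $n$ edges at separated heights to keep them disjoint accumulates to $O(n^2)$.

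The core step is to route each cycle edge $v_iv_{i+1}$ as a grid polyline with at most three bends. I would leave $v_i$ along its assigned cycle-port direction, use an intermediate segment in $+e_Y$ (or its reverse) to travel to the track of $v_{i+1}$, a further segment in the third grid direction to absorb the horizontal displacement, and a final segment arriving at $v_{i+1}$ along that vertex's port direction; the three bends are exactly the turns between these four segments, each chosen to be a $60^\circ$ turn so the interior angle is $120^\circ$. The integer segment lengths give enough freedom to hit the integer target $v_{i+1}$ exactly, and the closing edge $v_nv_1$ I would route around the outside of the drawing, in analogy with the top/bottom closing side of Lemma~\ref{lem:single-cycle}. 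Finally I would discharge the remaining bookkeeping: only directions in $E_{XY}$ appear (so all edges are parallel to $E_{XY}$ and all bend angles are $120^\circ$), attachment edges are bend-free with the correct unit offset, no edge passes through a non-incident vertex, distinct edges do not cross by the height separation of the tracks, and the bounding box is $O(n)\times O(n^2)$. Note that, unlike the lifted version, this construction works uniformly for all $n\ge 3$, since the ``triangle'' edges are bent polylines rather than straight segments.

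The main obstacle is the simultaneous rigidity of the ports and the arbitrariness of the prescribed $x$-coordinates. Because the departure and arrival directions at both endpoints of a cycle edge are fixed, and because horizontal and vertical displacements on the triangular grid are coupled through the diagonal direction $e_X-e_Y$, a naive connection need not land on integer grid points using only $60^\circ$ turns. Showing that three bends always suffice to absorb an arbitrary displacement $x_{i+1}-x_i$ while keeping every turn valid, and doing so at separated heights so that the edges stay disjoint and the total height remains $O(n^2)$, is the crux; once the routing template is fixed, everything else is the same angle-and-direction bookkeeping as in Lemma~\ref{lem:single-cycle}.
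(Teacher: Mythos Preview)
Your framework is exactly the paper's: pendant edge leaves each cycle vertex along $+e_Y$, the two cycle edges leave along $-e_X$ and $e_X-e_Y$, and each cycle edge is routed as a four-segment polyline with three $120^\circ$ bends on the triangular grid. So the approaches coincide.

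Where you leave a gap is precisely the place you flag as ``the crux'': you do not say \emph{which} of the two cycle ports each edge uses at each endpoint, and without that the claim that a three-bend $60^\circ$-turn polyline always connects the two prescribed ports across an arbitrary even displacement $x_{i+1}-x_i$ is not established. The paper resolves this by importing the plus/minus edge-end labeling from Section~\ref{sec:deg-four}: at every vertex one incident cycle edge is ``plus'' (enters from above via an $(e_X-e_Y)$-parallel segment) and the other ``minus'' (enters from below via an $e_X$-parallel segment), and the labels are propagated around the cycle so that both ends of every edge carry the \emph{same} sign (with a single opposite-sign edge $(v_1,v_2)$ when $n$ is odd). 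Once both endpoints agree, there is a fixed four-segment template---for a plus/plus edge: $(e_X-e_Y)$, then $e_X$ up to two units above the higher endpoint, then a unit $e_Y$-segment, then $(e_X-e_Y)$ back down---that realises the edge with exactly three $120^\circ$ bends; the minus/minus case is symmetric, and the single odd-cycle exception is a two-segment edge with one bend.

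A second, smaller difference: you pre-assign separated $y$-tracks and argue $O(n^2)$ height by nesting, whereas the paper places vertices iteratively, letting the routing of edge $(v_i,v_{i+1})$ determine the $y$-coordinate of $v_{i+1}$ and spreading by $O(n)$ in the $e_Y$-direction only when a segment would collide with earlier parts of the drawing. Both yield the same $O(n)\times O(n^2)$ bound, but the iterative placement meshes directly with the labeling and avoids a separate disjointness argument.
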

\begin{proof}
We embed each cycle in a similar manner as in Section~\ref{sec:deg-four}.
We label each end of each edge of the cycle with a plus or a minus sign such
that at every vertex exactly one end is marked with a plus sign and exactly
one with a minus sign. We construct this labeling in exactly the same way as in
Section~\ref{sec:deg-four}.
Suppose that the $y$-axis of the $xy$-plane points
horizontally to the right. We then would like to embed the cycle in
the hexagonal grid of the $xy$-plane in such a way that edges labeled with a plus sign enter the vertex from above and edges labeled with a minus sign enter the vertex from below.
Moreover, the edge-segment entering a vertex from below is
$e_X$-parallel and the edge-segment entering a vertex from above is
$(e_X-e_Y)$-parallel. With this orientation for the edges, every edge whose ends have identical labels can
be embedded with exactly three bends. However, an edge that has opposite signs at its two ends can only be embedded with three bends if its
lower end (the end incident to a vertex with smaller $x$-coordinate)
is the end labeled with a plus sign.
We then embed the edges and vertices as follows.
\begin{itemize}
\item We place the vertex
$v_1$ at some point $(x_1,y_1)$ in the $xy$-plane, where $x_1$ is its
given $x$-coordinate.
\item If the cycle contains an
odd number of vertices then the first edge $(v_1,v_2)$ is labeled with two opposite signs, and is drawn as follows assuming that
$v_1$ is the lower of the two vertices. From $v_1$ we draw an
$e_X-e_Y$-parallel edge segment followed by an $e_X$-parallel segment
of equal length such that we reach the $e_Y$-parallel line at $x =
x_2$. We place $v_2$ at that position.
\item If $i>1$ or if $i=1$ and the cycle has an even number of vertices, then edge $v_i,v_{i+1}$ is labeled with two plus signs or two minus signs. In the case that it is labeled with two plus signs, we start drawing an $e_X-e_Y$-parallel edge segment from $v_i$
followed by an $e_X$-parallel segment of equal length until we reach
an $e_Y$-parallel line that is two units above the higher of the two
vertices. If these edge segments intersect any previous part
of the drawing we may need to spread the drawing in the $e_Y$-direction by
a distance of $O(n)$ that is added to the length of an $e_Y$-parallel
edge. From there we add a unit-length $e_Y$-parallel segment and
another $e_X-e_Y$-parallel segment until we reach the $e_Y$-parallel
line with $x=x_{i+1}$. This is where we place $v_{i+1}$.
We proceed
symmetrically for any edge marked with two minuses.
\item For the edge
$(v_1,v_n)$ the only difference is that its $e_Y$-parallel segment is
placed either two units below the lowest point or two units above the
highest point of the drawing according to its labels.
\end{itemize}
Finally, we embed each degree-one vertex one unit to the right of its
cycle-neighbor. See Figure~\ref{fig:skew}(b) for an illustration.
Note that the size of the grid for drawing $G$ is linear in the
$x$-direction but in the worst case quadratic in the
$y$-direction.
\end{proof}

As in our two-bend non-grid embedding for degree-three graphs, our overall embedding algorithm begins by finding and embedding a chordless
cycle of a given graph $G$ and then extends partial drawings of our graph
$G$ using Lemma~\ref{lem:flat-single-cycle} until we obtain the
drawing of the entire graph. We define an \emph{extensible partial
  grid-drawing} of a set $S$ of vertices of $G$ to be a crossing-free
grid-drawing of the subgraph $G[S]$ induced in $G$ by $S$, with the
following properties:
\begin{itemize}
\item The drawing of $G[S]$ has angular resolution $120^\circ$.
\item Each vertex in $S$ has at most one neighbor in $G\setminus S$.
\item Each vertex in $G\setminus S$ has at most one neighbor in $S$.
\item If a vertex $v$ in $S$ has a neighbor $w$ in $G\setminus S$,
  then we can draw an edge $(v,w)$ with at most three bends of
  $120^\circ$ that starts with an $e_Y$-parallel edge segment called
  the \emph{extension ray} of $v$. The placement of $(v,w)$ and $w$ is
  such that the resulting drawing of $G[S\cup\{w\}]$ has angular
  resolution $120^\circ$ and remains
  extensible and non-crossing.
\item For any $x$-coordinate $x_0$ there is at most one vertex $v$ in
  the vertical plane through $x_0$ with an \emph{active} extension
  ray, i.\,e., an extension ray that is not yet part of an actual edge
  $(v,w)$ since $w$ is still a vertex in $G\setminus S$.
\item All vertices in $S$ have even $z$-coordinates.
\end{itemize}

One difference between these properties and the ones used for our two-bend drawings is
the requirement that each vertex in $G\setminus S$ have at most one neighbor in $S$. To meet this requirement, when we add a cycle to the drawing, we will also add more vertices until this requirement is met.
To formalize this, define the \emph{double-adjacency closure} of a set
of vertices $S$ in a graph $G$ to be the smallest superset $W(S)
\supseteq S$ such that 
every vertex in $G\setminus W(S)$ that is adjacent to $W(S)$ has at least two other neighbors in $G \setminus W(S)$.
The double-adjacency closure of $S$ may be obtained by initializing a
variable set $W$ to be empty and then repeatedly adding to $W$
any vertex in $G\setminus (S \cup W)$ that has at most one neighbor in
$G\setminus (S \cup W)$ until no more such vertices exist; once this process
converges, the double-adjacency closure $W(S)$ is $S \cup W$.\martin{The way the
  closure is defined now is as a superset of $S$ hence I replaced all
  some occurrences of $S\cup W$ by $W(S)$.}

\begin{lemma}
\label{lem:add-flat-cycle}
For any extensible grid-drawing of a set $S$ of vertices in a graph
$G$ of maximum degree three, and any chordless cycle $C$ in
$G\setminus S$, there exists an extensible grid-drawing of $W(S\cup
C)$, where $W(S\cup C)$ is the double-adjacency closure of $S\cup C$.
\end{lemma}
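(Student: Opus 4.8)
The plan is to follow the scheme of Lemma~\ref{lem:add-cycle}, but to route everything along the grid directions $E_{XY}$ and $E_{YZ}$ and to replace the invariant ``no three extension rays are coplanar'' by the grid invariant that each $x$-coordinate (each vertical plane) carries at most one active extension ray. I would first choose a new even $z$-coordinate $z_C$ strictly above the whole current drawing and draw $C$ flat in the horizontal plane $z=z_C$ by Lemma~\ref{lem:flat-single-cycle}. The freedom that lemma gives in selecting the distinct even $x$-coordinates $x_1,\dots,x_n$ of the cycle vertices is the key tool: for each vertex $v_i$ of $C$ whose unique neighbour outside $C$ lies in $S$, I set $x_i$ equal to the $x$-coordinate of that neighbour's extension ray, and every other cycle vertex receives a fresh even $x$-coordinate not shared with any existing active ray. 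This assignment is injective, because every vertex of $S$ has at most one neighbour in $G\setminus S$, every cycle vertex has at most one neighbour outside $C$, and the current drawing already carries at most one active ray per column.

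Second, I would connect $C$ down to $S$. For a cycle vertex $v_i$ whose neighbour $w$ lies in $S$, the two now share the vertical plane $X=x_i=x_w$, so the entire edge $v_iw$ can be built inside that plane: starting from $w$'s $e_Y$-parallel extension ray I climb in the $z$-direction with $e_Z$- and $(e_Y-e_Z)$-parallel segments and meet the $e_Y$-parallel stub left at $v_i$ by Lemma~\ref{lem:flat-single-cycle}, using at most three $120^\circ$ bends. That these connecting paths are crossing-free is exactly the extensibility guarantee already carried by $w$'s ray, and since distinct connections live in distinct vertical planes they cannot meet one another. After this step the rays of $S$ that reached $C$ are consumed, while every cycle vertex whose neighbour still lies in $G\setminus(S\cup C)$ retains its $e_Y$-parallel stub as a new active ray in its own fresh column.

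Third, I would realise the double-adjacency closure. Following the defining process, I repeatedly select a vertex $u\in G\setminus(S\cup C)$ that has at most one neighbour outside the drawn set; then $u$ has two or three neighbours already drawn, each offering an active ray aimed at $u$. Each such $u$ is absorbed by a grid analogue of Lemmas~\ref{lem:add-vertex} and~\ref{lem:triskelion}: place the segments incident to $u$ along the horizontal directions of $E_{XY}$, which meet pairwise at $120^\circ$, and join their outer ends to the relevant extension rays by vertical-plane paths that again climb in $z$ at $120^\circ$, with at most three bends per edge. The rays consumed by $u$ free their columns, and if $u$ keeps a neighbour outside the drawn set it is given a fresh column for its own $e_Y$-parallel active ray, so the per-column invariant is preserved; all placements keep $z$-coordinates even. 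Iterating until no absorbable vertex remains produces a crossing-free grid-drawing of exactly $W(S\cup C)$ with $120^\circ$ angular resolution, at most three bends per edge, and even $z$-coordinates, and the defining property of the closure is what guarantees that every vertex left in $G\setminus W(S\cup C)$ has at most one neighbour inside, so the result is again extensible.

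The step I expect to be the main obstacle is the global column bookkeeping. I must exhibit a single consistent assignment of columns in which every cycle vertex meeting $S$ lands in its neighbour's column, every freshly created active ray (from $C$ and from the closure vertices) gets a column used by no other active ray, and every connecting path can be routed inside one vertical plane without two active rays colliding in a column or two vertical paths crossing. This is precisely where the hypotheses ``each vertex has at most one neighbour outside $S$'' and ``at most one active ray per $x$-coordinate'' are indispensable, and where the order of absorption in the closure---always a vertex with a single remaining outside neighbour---must be invoked to keep each newly added vertex reachable from the rays currently available.
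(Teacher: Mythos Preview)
Your overall architecture matches the paper's: draw $C$ in a fresh horizontal plane via Lemma~\ref{lem:flat-single-cycle}, align columns so that each $S$--$C$ connection lives in a single vertical $yz$-plane, and then absorb the closure vertices. The gap is in the column assignment and in the absorption step. You only align a cycle vertex $v$ with a column of $S$ when $v$ has a \emph{direct} neighbour in $S$; every other $v$ gets a fresh column. But a closure vertex $u$ may be adjacent to some $v\in C$ and to some $w\in S$, or to two vertices $v_1,v_2\in C$. Under your assignment these rays end up in distinct vertical planes, and your ``grid analogue of Lemmas~\ref{lem:add-vertex} and~\ref{lem:triskelion}'' then has to bridge two different $x$-columns while staying within three bends and $120^\circ$ angles. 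That is exactly what fails: an $e_Y$-parallel stub in one column needs at least two $E_{XY}$ bends just to reach another column, and you still have to climb in $z$ and meet the other stub at $120^\circ$. The paper avoids this by assigning columns \emph{with look-ahead through the closure}: if $u\in W(S\cup C)$ is adjacent to $v\in C$ and $w\in S$ (or to two cycle vertices), then $x_v$ is set to $x_w$ (respectively the two cycle vertices are forced into the same column) before the cycle is drawn, so that $u$ can be realised entirely inside one vertical plane.

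A second missing ingredient is the plane-switching device (the ``$\sim$'' in Figure~\ref{fig:skew-cases}(iv)). When closure vertices chain---$u$ is absorbed, and then a further closure vertex $w'$ is adjacent to $u$ and to some $v'$ in a different column---even the look-ahead assignment cannot put everything into one column. The paper exploits that $u$ sits at an exclusive even $z$-level: two $E_{XY}$ bends in $u$'s horizontal plane shift its extension ray over to $x_{v'}$, after which the connection to $w'$ and $v'$ proceeds in a single vertical plane with the remaining bend budget. Your proposal to give each absorbed $u$ a ``fresh column'' never recovers this alignment, so the chained case is not covered. Your closing paragraph correctly identifies column bookkeeping as the crux, but the body of the argument does not supply the mechanism that resolves it.
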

\begin{proof}

  For each vertex $v$ in $C$ that has a neighbor $w$ in $G$, replace
  $w$ with a degree-one vertex. We next determine the $x$-coordinate
  $x_v$ for each vertex $v$ of $C$ as follows. If $v$ is adjacent to
  some vertex $w$ in $S$ or there is a third vertex $u$ in $G\setminus
  (S\cup C)$ that is adjacent to both $v$ and some vertex $w$ in $S$
  then we set $x_v$ to be the $x_w$ (we can do that because $w$ is
  unique). The same applies if there is a vertex $u$ in $G\setminus
  (S\cup C)$ that is adjacent to both $v$ and an already placed vertex
  $w$ of $C$. Otherwise we set $x_v$ to be the smallest even integer
  which is distinct from $x$-coordinates of all vertices in $S$ and
  all vertices in $C$ whose $x$-coordinates are already set. There is
  another special case to deal with: Let $u$ be a degree-three vertex
  in the double-adjacency closure $W(C)$ that has two neighbors in
  $W(C)$ and one neighbor $w$ in $S$. Then initially $u$ and all its
  predecessors in $W(C)$ are assigned a new $x$-coordinate. We need,
  however, that $u$ and all its predecessors are assigned the
  $x$-coordinate of $w$.

 \begin{figure}[tb]
    \centering
    \includegraphics[page=3]{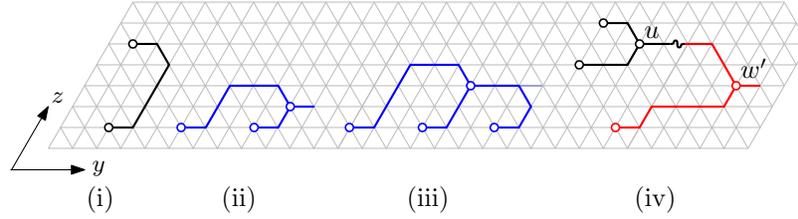}
    \caption{Different cases for connecting $S$ with a new cycle $C$
      (from left to right): i) an edge between two different cycles; ii)
    a vertex connected to two vertices of the same cycle; iii) a
    vertex connected to three vertices of the same cycle; iv) a vertex
  $u$ connected to two different cycles and then connected via $w$ to
  a third cycle involving a switch of vertical planes (indicated by $\sim$).}
    \label{fig:skew-cases}
  \end{figure}

  We will apply Lemma~\ref{lem:flat-single-cycle} to find a drawing of
  the double-adjacency closure $W(C)$ with the $x$-coordinates defined
  above such that $C$ is drawn in a horizontal plane. We place this horizontal plane high enough above the
  existing drawing of $S$ such that all the connections between $C$
  and $S$ can be drawn as 3-bend grid-paths with appropriate angular
  resolution. Note that since every cycle can spread as much as
  $O(n^2)$ in y-direction, we might have to draw the new cycle as far
  as $O(n^2)$ away from the existing drawing in z-direction. 
  If there is a vertex $u$ in $W(C)\setminus C$ that is adjacent
  to two vertices $v,w$ in $C$ then $v$ and $w$ lie in the same
  vertical plane. We place $u$ in that vertical plane and set its
  $z$-coordinate to be the next unused even $z$-coordinate above the
  drawing of $C$. Let $v$ have lower $y$-coordinate than $w$. Then we
  draw the edge $(v,u)$ in the vertical plane with three bends, where
  its four segments are $e_Y$-parallel, $e_Z$-parallel,
  $e_Y$-parallel, and $e_Y-e_Z$-parallel starting from $v$, see
  Figure~\ref{fig:skew-cases}(ii).  Thus
  $(v,u)$ connects to $u$ from above. We then draw the edge $(w,u)$
  with a single bend connecting to $u$ from below. If required we can
  place an extension ray for $u$ that is $e_Y$-parallel.  If there is
  a vertex $u$ in $W(C)\setminus C$ that is adjacent to three
  vertices $v,w,t$ in $C$ then by construction all three vertices have
  the same $x$-coordinate. Let $v,w,t$ be ordered by increasing
  $y$-coordinate. Then we place $u$ as if it had the two neighbors $v$
  and $w$ in $C$. Since $t$ is the rightmost vertex we can connect it
  with a three-bend edge to the extension ray of $u$ in the vertical
  plane of $u,v,w,t$, see Figure~\ref{fig:skew-cases}(iii). We note that
  this drawing of the double-adjacency closure $W(C)$ has indeed at
  most one active extension ray in each vertical plane.

  Now we connect $W(C)$ to $S$.  For each vertex $w$ of $S$ that is
  adjacent to a vertex $v$ in $W(C)$, draw a three-bend grid-path with
  $120^\circ$ bends in the plane containing the extension rays of $v$
  and $w$. Since $w$ and $v$ are placed at grid points with the same
  $x$-coordinates, this plane is parallel to the $yz$-plane and the
  edge $(v,w)$ follows the grid, see Figure~\ref{fig:skew-cases}(i).

  Similarly, if there is a vertex $u$ in $W(S\cup C)$ that
  is adjacent to a vertex $v$ in $C$ and a vertex $w$ in $S$ we have
  assigned identical $x$-coordinates to $v$ and $w$ and connect them
  in a vertical plane with three bends as if there was an edge
  $(v,w)$.  Then, however, we insert the vertex $u$ at the middle bend
  of the edge, and, if $u$ has degree three, add an $e_Y$-parallel
  extension ray to $u$.

  For any vertex $u$ in $W(S\cup C)$ that we introduce we need
  to check whether $u$ has a common neighbor $w'$ in $W(S\cup C)$
  together with another vertex
  $v'$ in $S\cup C$. If that is the case we also add $w'$ in the vertical
  plane of $v'$ as follows. The $x$-coordinates of $u$ and $v'$ do not
  match. However, since $u$ has an exclusive $z$-coordinate, we can
  spend two bends in the horizontal plane of $u$ to shift its
  extension ray to the $x$-coordinate of $v'$. Then we add $w'$ in the
  vertical plane of $v'$ so that the edge $(u,w')$ has three bends and
  the edge $(v',w')$ has at most three bends. This is illustrated in
  Figure~\ref{fig:skew-cases}(iv). We continue this process until all
  vertices in $W(S\cup C)$ are placed.

  Note that we never introduce crossings when drawing edges in
  vertical planes.  As we now show, there is at most one active extension
  ray in any vertical plane at any time. And since we extend the
  drawing in the positive $y$-direction this active extension ray is
  always rightmost in its vertical plane. By construction, this is the
  case for the existing drawings of $S$ and $W(C)$. Now we consider the
  combined drawing. It is certainly still true after we assign an
  unused $x$-coordinate $x_0$ to a vertex $v$. We will only assign the
  $x$-coordinate $x_0$ again if a new vertex $w$ connects either
  directly or via an intermediate vertex $u$ in $W(S\cup C)$
  to $v$. In the first case we draw the edge $(v,w)$ and have no more
  active extension rays. In the second case, we add the vertex $u$,
  which is then the only vertex with an active extension ray in this
  plane, and $u$ is to the right of $v$ and $w$. Whenever we use the
  active extension ray of a vertex $v$ to 
  connect to a new vertex $w$ then the $z$-coordinate of $w$ is larger
  than the $z$-coordinate of any existing point in that vertical
  plane. So we connect the rightmost point with the topmost point
  in the vertical plane, and hence the new edge does not produce
  crossings.
\end{proof}

\begin{figure}[h]
\centering\includegraphics{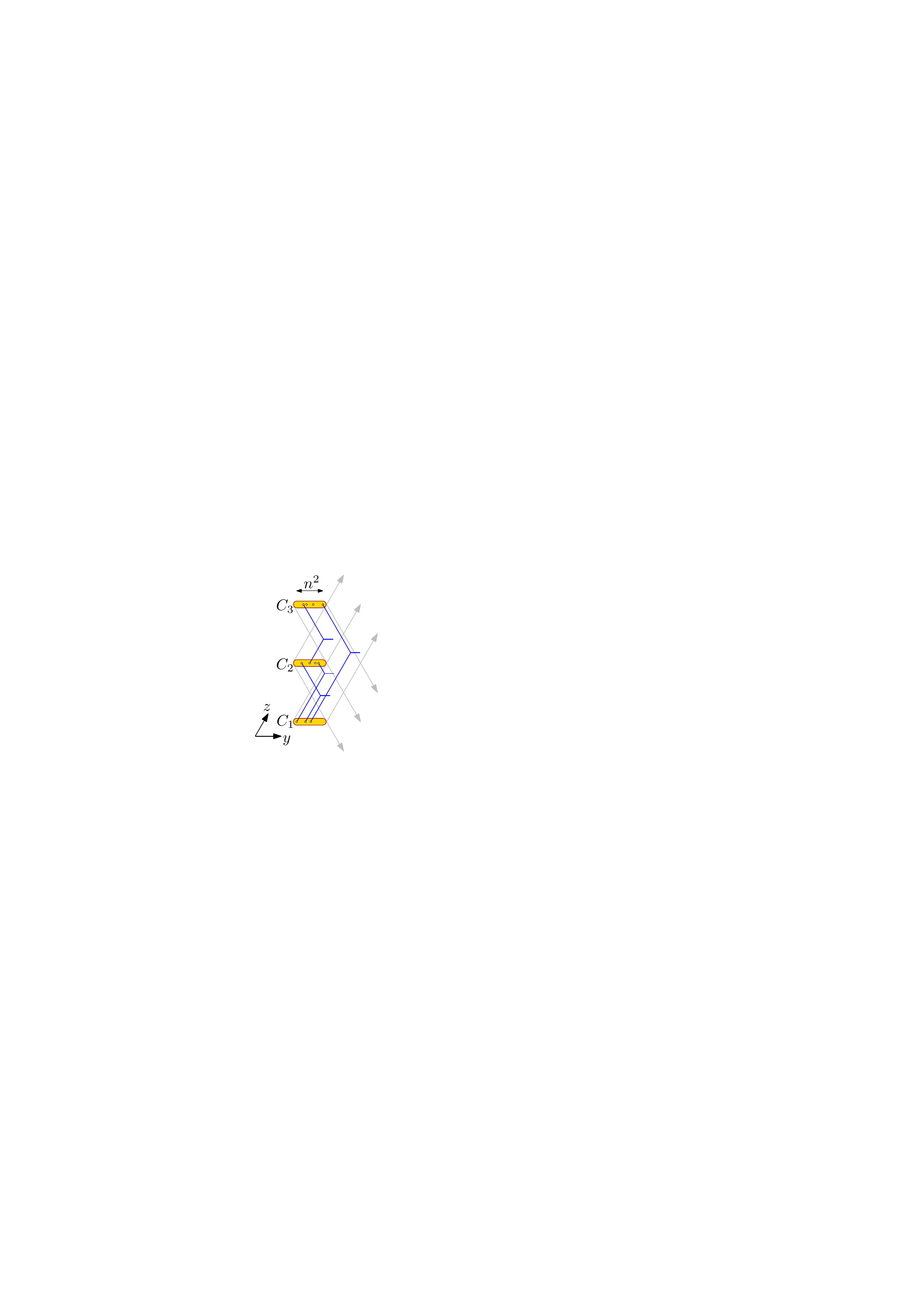}
\caption{Connections between cycles $C_i$ and $C_j$ can be forced to immediately move their horizontal planes (because of crossings within that plane), so they have to be spaced enough to make any connection possible. Since one cycle has size $n^2$, their spacing is $O(n^2)$ and the total distance in the $z$-direction is $O(n^3)$, and the resulting distance in the $y$-direction is therefore also $O(n^3)$.}
\label{fig:global}
\label{fig:skew-size}
\end{figure}

The next theorem follows from Lemma~\ref{lem:add-flat-cycle} and Figure~\ref{fig:skew-size}.

\begin{theorem}
  Any graph $G$ with maximum vertex degree three can be drawn in a 3d grid
  of size $O(n^3)\times O(n^3)\times O(n^3)$ with angular resolution
  $120^\circ$, three bends per edge and no edge crossings.
\end{theorem}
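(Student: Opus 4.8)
The plan is to mirror the greedy cycle-by-cycle strategy of Theorem~\ref{thm:deg3bend2}, but to replace its three extension lemmas by the single grid lemma, Lemma~\ref{lem:add-flat-cycle}; the double-adjacency closure now plays the role previously handled by the vertex- and triskelion-addition steps. Concretely, I would maintain an \emph{extensible grid-drawing} of a double-adjacency-closed set $S$, starting from $S=\emptyset$ (the empty drawing, which is vacuously extensible and closed). As long as $G\setminus S$ contains a cycle, I choose a shortest such cycle $C$; being shortest it is chordless, and since $n\ge 3$ is allowed in Lemma~\ref{lem:flat-single-cycle} there is no need to eliminate triangles first as in the two-bend construction. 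Applying Lemma~\ref{lem:add-flat-cycle} then replaces $S$ by an extensible grid-drawing of the double-adjacency closure $W(S\cup C)$, strictly enlarging $S$ while preserving all invariants. Since $S$ grows at every step, the loop terminates.

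Two points need care for full coverage of $G$. First, when the loop halts, $G\setminus S$ contains no cycle and is therefore a forest. I would argue that the closure operation has already absorbed every tree hanging off a processed cycle: the leaves of such a tree have a single neighbour outside $S$ and are pulled into $W$, after which their parents drop to one outside-neighbour and are pulled in as well, peeling the tree inward up to its attachment vertex. Consequently the leftover forest consists only of connected components of $G$ that are themselves acyclic. Such a pure-tree component has no cycle to seed the construction, so I would handle it by a small preprocessing trick: add one dummy edge joining two leaves to create a cycle, run the algorithm, and delete the dummy edge at the end. Deleting an edge cannot create a crossing and can only increase the angular resolution at its endpoints, so the resulting drawing still satisfies the theorem; the few degenerate components (a single vertex or a single edge) are drawn directly.

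It remains to bound the volume, which is the quantitative heart of the statement and corresponds to Figure~\ref{fig:skew-size}. By Lemma~\ref{lem:flat-single-cycle} each cycle together with its closure occupies an $O(n)\times O(n^2)$ region of its horizontal plane, and the decomposition uses at most $n/3$ vertex-disjoint cycles, which are stacked in the $z$-direction. The difficulty is that a connector edge leaving one horizontal plane may be blocked within that plane and forced to bend almost immediately, so it must be allowed to traverse the full $O(n^2)$ $y$-extent of a cycle while climbing to the next plane; this forces consecutive horizontal planes to be spaced $\Theta(n^2)$ apart in $z$, giving a total $z$-extent of $O(n)\cdot O(n^2)=O(n^3)$. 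Because the coordinate frame is skewed (edges run along face diagonals), the $e_Z$-parallel segments realizing this $z$-growth also contribute to the standard $x$- and $y$-coordinates once we change back to the ordinary grid, so each of the three axes ends up $O(n^3)$, yielding the claimed $O(n^3)\times O(n^3)\times O(n^3)$ grid.

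I expect the main obstacle to be exactly this spacing analysis: establishing rigorously that an $O(n^2)$ separation between horizontal planes always suffices for all the three-bend connector paths of Lemma~\ref{lem:add-flat-cycle} to be routed simultaneously, without crossings and with $120^\circ$ bends, and then tracking how the skew-to-orthogonal change of coordinates redistributes this growth across all three axes. By contrast, the combinatorial correctness — termination, chordlessness of the chosen cycles, and the closure absorbing all attached trees — is comparatively routine once Lemma~\ref{lem:add-flat-cycle} is available.
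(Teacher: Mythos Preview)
Your proposal is correct and follows essentially the same approach as the paper: repeatedly apply Lemma~\ref{lem:add-flat-cycle} to absorb a chordless cycle together with its double-adjacency closure, stack the resulting horizontal planes with $O(n^2)$ vertical spacing as in Figure~\ref{fig:skew-size}, and then pass from the skewed $(e_X,e_Y,e_Z)$ frame back to Cartesian coordinates to get the $O(n^3)$ bound on each axis. Your additional handling of purely acyclic components via a dummy edge is a small point the paper leaves implicit, but otherwise the decomposition, the key lemma, and the volume accounting all coincide with the paper's proof.
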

\begin{proof}
  As in our two-bend drawing of the same graphs, we decompose the
  graph into a sequence of cycles and isolated vertices, where each
  isolated vertex belongs to the doubly-adjacent closure of the
  previous cycles. We apply Lemma~\ref{lem:add-flat-cycle} to extend
  the drawing for each successive cycle.  We start by drawing the
  first cycle in the $z = 0$ plane, and extend the drawing by always
  adding new cycles above the first one and drawing trees extending
  into the $y$-direction. Our drawing uses $O(n)$ different $yz$
  planes, and therefore has extent $O(n)$ in the $x$ direction of our
  modified coordinate system. As the analysis in
  Figure~\ref{fig:skew-size} shows, it extends for $O(n^3)$ units in
  the other two coordinate directions. This leads to a final drawing
  of size $O(n) \times O(n^3) \times O(n^3)$ in our modified
  coordinate system.

  Finally, if we change the coordinate system back to the more
  standard Cartesian coordinates, this means that in each dimension
  the size can be $O(n^3)$.
\end{proof}


\else
In the full version we show that any graph of degree three has a drawing with $120^\circ$ angular resolution, integer vertex coordinates, edges parallel to the face diagonals of the integer grid, at most three bends per edge, and polynomial volume.
\fi
\section{Conclusions}

We have shown how to draw degree-three graphs in three dimensions with
optimal angular resolution and two bends per edge, and how to draw
degree-four graphs in three dimensions with optimal angular
resolution, three bends per edge, integer vertex coordinates, and
cubic volume.  Multiple questions remain open for investigation,
however:
\begin{itemize}
\item It does not seem to be possible to draw $K_4$ or $K_5$ in three dimensions with optimal angular resolution and one bend per edge. Can this be proven rigorously?
\item Does every degree-four graph have a drawing in three dimensions with optimal angular resolution and two bends per edge? In particular, is this possible for $K_5$?
\item How many bends per edge are necessary to draw degree-three graphs with optimal angular resolution in an $O(n)\times O(n)\times O(n)$ grid, with all edges parallel to the face diagonals of the grid?
\item It should be possible to draw degree-three and degree-four graphs with optimal angular resolution in an $O(\sqrt n)\times O(\sqrt n)\times O(\sqrt n)$ grid. How many bends per edge are necessary for such a drawing?
\end{itemize}
\ifProceedings
\smallskip
\noindent\textbf{Acknowledgments.}
\else
\subsection*{Acknowledgments}
\fi
This research was supported in part by the National Science
Foundation under grant 0830403, by the
Office of Naval Research under MURI grant N00014-08-1-1015, and by the
German Research Foundation (DFG) under grant NO 899/1-1.

{ \ifArxiv \raggedright \fi
  \bibliographystyle{abuser}
  \bibliography{ar3d}
}

\ifSubmission
\appendix

\fi

\end{document}